\newcommand{\RNum}[1]{\uppercase\expandafter{\romannumeral #1\relax}}
\def\barroman#1{\sbox0{#1}\dimen0=\dimexpr\wd0+1pt\relax
  \makebox[\dimen0]{\rlap{\vrule width\dimen0 height 0.06ex depth 0.06ex}%
    \rlap{\vrule width\dimen0 height\dimexpr\ht0+0.03ex\relax 
            depth\dimexpr-\ht0+0.09ex\relax}%
    \kern.5pt#1\kern.5pt}}
\newcommand{\chromas}{\chi^{\text{\barroman{V}}}}
\newcommand{\notes}{\mathcal{N}^{\text{\barroman{V}}}}
\newcommand{\notesa}{\mathcal{N}^{\text{\barroman{V}}}_{\alpha}}
\newcommand{\notese}{\nu_{i,j}}
\newcommand{\backdelta}{|\delta^{-1}|}
\newcommand{\interp}{\mathcal{I}}
\newtheorem{theorem}{Theorem}[section]
\newtheorem{definition}{Definition}[theorem]
\newtheorem{corollary}{Corollary}[theorem]
\newtheorem{lemma}{Lemma}[theorem]
\newtheorem*{remark}{Remark}
\newtheorem{proposition}[theorem]{Proposition}
\newtheorem*{example}{Example}
\newtheoremstyle{case}{}{}{\itshape}{\parindent}{\bfseries}{:}{ }{}
\theoremstyle{case}
\newtheorem{case}{Case}
\begin{document}

\title{A Geometric Framework for Pitch Estimation on Acoustic Musical Signals}

\author{
Tom Goodman$^{*}$\thanks{$^{*}$\text{School of Computer Science, University of Birmingham}},
\text{Karoline van Gemst}$^{\dagger}$\thanks{$^{\dagger}$ \text{School of Mathematics and Statistics, University of Sheffield}},\text{and Peter Tino}$\protect^{*}$}

\maketitle

\begin{abstract}
This paper presents a geometric approach to pitch estimation (PE) - an important problem in Music Information Retrieval (MIR), and a precursor to a variety of other problems in the field. Though there exist a number of highly-accurate methods, both mono-pitch estimation and multi-pitch estimation (particularly with unspecified polyphonic timbre) prove computationally and conceptually challenging. A number of current techniques, whilst incredibly effective, are not targeted towards eliciting the underlying mathematical structures that underpin the complex musical patterns exhibited by acoustic musical signals. Tackling the approach from both a theoretical and experimental perspective, we present a novel framework, a basis for further work in the area, and results that (whilst not state of the art) demonstrate relative efficacy. The framework presented in this paper opens up a completely new way to tackle PE problems, and may have uses both in traditional analytical approaches, as well as in the emerging machine learning (ML) methods that currently dominate the literature. 

\end{abstract}

\begin{keywords}
Pitch Estimation, Signal Processing, Geometry, Visualisation, Music Information Retrieval
\end{keywords}

\section{Introduction}
Music Information Retrieval is {increasingly gaining momentum} as a cross-disciplinary field of research \citep{muller2011signal}, pulling together techniques and researchers from computer science, cognitive science, musicology, and electrical engineering, amongst others. Despite this, many problems that on the surface appear to be trivially solvable from a human perspective have proved intractable for computers, and thus, have remained unsolved. 

Pitch estimation is one such problem - the ability to take a musical signal as an input, and at any given position in the signal, be able to ascertain what notes (pitch chroma/pitch height pairs) are present. This is made difficult because of the sheer volume of timbres that could be present, and notes that could be played with various amplitudes, amongst other things. The remarkable variety that exists within music, especially with increasing levels of polyphony, render the problem of multi-pitch estimation (MPE) incredibly challenging indeed. 

Many approaches to pitch estimation are restricted to specific instruments (for example, guitar tuning devices) \citep{bock2012polyphonic, steinberger1996chromatic, schramm2017multi}. As a result, researchers are able to exploit certain properties and assumptions (pertaining to the specific context in which the methods will operate) to increase the accuracy of their approaches. This helps to circumvent the greater difficulty of developing approaches that function from a more generalised perspective.

Relatively recently, there has been a notable influx of approaches to problems in Music Information Retrieval that utilise state-of-the-art machine learning techniques \citep{8683582, wu2018automatic}. Whilst the accuracy of such approaches has proved good, the lack of ability to inspect (and further, understand), the inner workings of them has resulted in a lack of deep insight - especially into the underlying mathematical structures that they are approximating. This provides a strong motivation to develop novel algorithmic approaches \citep{goodman2018real}, in an attempt to discern the intrinsic models - perhaps even those that we, as humans, exploit to perform these tasks. 

Additionally, given the growing use of processing power for these approaches \citep{strubell2019energy}, and the increasingly pressing nature of climate change - with the IPCC special report on global warming of 1.5°C concluding that our CO${2}$ emissions need to be halved by 2030 \citep{djalante2019icaa} to curtail the increasing numbers of natural disasters - it is imperative that we, as researchers, consider the efficiency of our algorithms and approaches. With the era of Moore's Law waning \citep{leiserson2020there}, it is ever important to work towards algorithmic approaches that may prove more computationally efficient. Indeed, although the cloud may appear to be some ephemeral resource, somewhere there are real computers using real electricity, and having real and profound impacts - both financially, and environmentally. Looking again to model acoustic musical signals (and more generally, AI problems) from mathematical points of view could give rise to novel and efficient approaches that, whilst less efficacious in the short-term than their machine-learning counterparts, may have the potential to open up new branches of research in the long-term, with a particular focus on understanding the problems being solved. 

In this paper, a novel geometric perspective and methodology for both mono- and multi-pitch estimation is presented. 
Section~\ref{sec:related} gives a broad overview of the related work, and a number of inspirations for the paper, following which Section~\ref{sec:reaching} introduces the model, which is then formalised in Section~\ref{sec:proposed}. Building on this, Section~\ref{sec:edge} examines the `edge cases' that arise in more detail. Sections~\ref{sec:reduction} and~\ref{sec:prevalence} present an approach to examine the prevalence of edge cases, and apply it to data sampled randomly from the total space. Moving on from the more theoretical perspective, Section~\ref{sec:real} applies the model to `real-world' data, with Section~\ref{sec:eval} more closely analysing the performance of simple algorithms working on the proposed model. Finally, Section~\ref{sec:future} proposes future direction for the work, and a number of potential applications of the framework. 

\section{Related Work}\label{sec:related}
\begin{figure}
    \centering
    \includegraphics[scale=3.0]{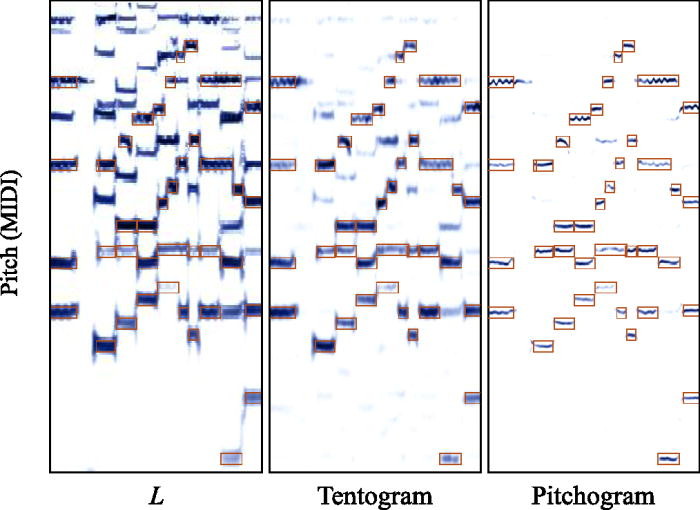}
    \caption{Demonstration of the progression from Spectogram to Tentogram, and then to Pitchogram \citep{elowsson2020polyphonic}.}
    \label{fig:elowsson_fig}
\end{figure}

Over the past few years, a plethora of approaches have been taken to tackle the challenge of pitch estimation - with methods utilising the wavelet transform \citep{kumar2020wavelet}, two-dimensional spectra \citep{zhang2020multi}, and visual infomation (i.e. by viewing the physical instrument itself) \citep{koepke2019visual}, amongst others. Though varied, much of the cutting-edge research is reliant on machine learning (ML) techniques - not necessarily seeking to better-understand the underlying structures present, and opting rather to maximise efficacy of the respective approaches.

Elowsson \citep{elowsson2020polyphonic} proposed a method for MPE that relies on `deep layered learning' \citep{elowsson2014polyphonic, elowsson2018deep}. It uses a multi-stage system of neural networks and processing steps to elicit pitch contours - i.e. pitch information coupled with an onset and offset for each distinct note. From the MPE side, they opted to create a `Tentogram' (i.e. a tentative spectogram) through a spectral summation (their Section IV-F), whitening, and logistic regression, which provided a much cleaner basis from which to detect pitch peaks. A neural network was then used to convert the Tentogram into a `Pitchogram', using parabolic interpolation to achieve a 1 cent resolution. From the Pitchogram, `blobs' are then identified \citep{miron2014audio}, with subsequent regions then merged (where related), and finally a peak ridge (1D contour) was extracted. This method exhibited state-of-the-art performance on the MAPS \citep{emiya2010maps}, Bach10 \citep{duan2015bach10}, TRIOS, and MIREX Woodwind quintet datasets.

Kelz, Rainer and B{\"o}ck looked to derive pitch contours from polyphonic audio specifically from piano \citep{kelz2019deep}. Previous ML approaches tended to use many networks to extract the various features, but they instead opted to use a shared representation \citep{ngiam2011multimodal} to simultaneously predict the ADSR (Attack, Decay, Sustain, Release) aspects of each note. By approaching the problem in this way, they were able to increase the potential for generalisation to other instruments, as the representations that prove useful to this task can be adapted to others. Further, rather than opting to train another network to learn the ADSR envelopes of the input, they chose to handcraft a Hidden Markov Model (HMM) with states corresponding to each envelope, as well as an additional state to represent that a note is not currently sounding. Similarly to this approach, it could be possible to use the proposed framework (Section~\ref{sec:proposed}) in a kindred manner (albeit from a different perspective).

The autocorrelation function (ACF) of a signal $\{x(n)\}_{n=0}^{N-1}$ at lag $m$,
\begin{equation*}
    r_{xx}(m) = \sum\limits_{n=m}^{N-1}x(n)x(n-m), 
\end{equation*}
 has been widely used to elicit pitch information from time-domain signals \citep{rabiner1977use, amado2008pitch, kraft2015polyphonic}. Recently, de Obald{\'\i}a, and Z{\"o}lzer conducted a study looking at the efficacy of the ACF on non-stationary sounds (to extract the fundamental period), and presented a number of augmentations that allowed them to improve on the current state-of-the-art approaches for monophonic pitch estimation \citep{de2019improving}. They achieved this by utilising musicological knowledge to construct a heuristic that identifies non-related jumps in the pitch contour, and subsequently modifying the signal to compensate for these. They reported state-of-the-art results on both speech (including PTDB-TUG \citep{pirker2011pitch}) and musical signals (including Bach10).

\begin{figure}
    \centering
    \includegraphics{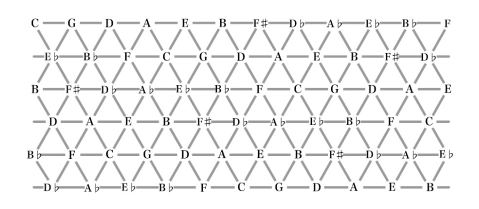}
    \caption{A Tonnetz without identified edges - clearly showing its periodic nature in both the vertical and horizontal directions (Image from \citep{bergstromimage}).}
    \label{fig:tonnetz}
\end{figure}
First described by Euler in 1739, the Tonnetz (Figure~\ref{fig:tonnetz}) \citep{euler19681739} presents a way to spatially demonstrate the relationship between chords. Each row corresponds to the circle of fifths, with each subsequent row corresponding to the previous one with each element shifted from position $n$, to position $(n + 3)\bmod 12$, and aligned such that the closest two notes diagonally `upwards' correspond to a major third in each direction. By identifying both vertically and horizontally, it is clear that the Tonnetz is in fact toroidal in nature. It has proven particularly useful in describing voice leadings in music - with distance between triangles (i.e. chords) on the Tonnetz corresponding to musical `distance' between chords. 

In addition to his work on the generalised Tonnetz \citep{tymoczko2012generalized}, Tymockzo posits a geometrical treatment of music theory in his book `A Geometry of Music' \citep{tymoczko2010geometry} - driven by underlying musicological knowledge. Where Lewin tackled this kind of formalisation from a group-theoretical perspective \citep{lewin2011generalized}, Tymockzo (whilst still basing his approach on symmetries) opted to employ a more visual approach; considering pitch/chord spaces in such a way that proves useful to composers and musicologists alike, without a necessarily profound understanding of the underpinning mathematics. 

Tymockzo defines musical objects, which are essentially ordered collections of notes (Eg. (C4, E4, G4)), and five ``OPTIC'' operations - 
\begin{itemize}
    \item \textbf{O}ctave - changing the pitch height of a note;
    \item \textbf{P}ermutation - reordering the object (which has the effect of changing which voice is playing which note);
    \item \textbf{T}ransposition - uniformly shifting all notes in an object by a given offset (and direction);
    \item \textbf{I}nversion - essentially reflection about a point in pitch space (i.e. pitches ordered chromatically along a 1D line);
    \item \textbf{C}ardinality change - introducing a new voice that duplicates a note that is already present in the object.
\end{itemize}
These describe transformations between musical objects. Further, he goes on to define a variety of musical constructs (such as chords and tone rows) in terms of the set of OPTIC transformations under which each construct remains invariant. 

Building on this framework, he defines a two-note chord space, on which progressions between two-note musical objects lie (Eg. (C4, E4)$\to$(C4, E$\flat$4)). By enumerating the whole space, and identifying the edges with a twist (which is necessary as, when enumerated fully, the vertical edges of the two-note chord space are the reverse of one another), the two-note chord space forms a M\"{o}bius strip. The use of this space in analysis is then demonstrated practically by applying it to elicit musical insights on pieces (such as Brahms' Op. 116, No. 5), that would otherwise be obscure if viewed in traditional notation. He goes on to provide a generalisation of n-note chord spaces in higher dimensions - with a three-note chord space forming a twisted triangular two-torus. 

Inspired in particular by Tymockzo's work, and historic algorithms such as Noll's Harmonic Pitch Spectrum (HPS) \citep{noll1970pitch} and De Cheveign{\'e} and Kawahara's YIN algorithm \citep{de2002yin}, this paper sets out to look at the problem of pitch estimation from a geometric point of view, and construct algorithms from the building blocks laid out herein. 

\section{Reaching a Model}\label{sec:reaching}
\begin{figure*}[]
    \centering
    \includegraphics[scale=0.8]{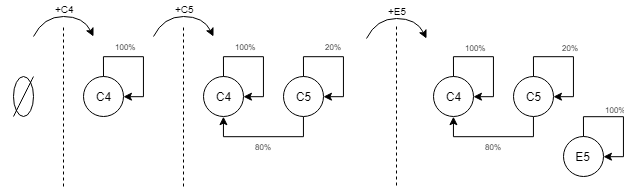}
    \caption{The build-up of a simplistic probabilistic representation.}
    \label{fig:prob}
\end{figure*}
Consider a frequency-sorted (low to high) set of notes, for example, \{(C, 4), (C, 5), (E, 5)\}. One can imagine wanting to build up some representation of where each note is likely to have originated. Figure~\ref{fig:prob} shows the iterative construction of one such model - which is somewhat adjacent conceptually to Markov chains. This could instead be viewed as a directed graph, with an edge from each node to every other node that it could potentially be a harmonic of. Weights are chosen from some node, B, to another node, A (of which B is potentially a harmonic) to be i, such that B is the i$^{\text{th}}$ harmonic of A (Figure~\ref{fig:graphgrid}). Here, each weight corresponds to some measure of the likelihood that a given note is generated by another. For example, given the presence of C4, the probability that C5 is a fundamental (i.e. generated by itself) may be 20\%, whereas there may be an 80\% probability that it is instead a harmonic of C4. These are, of course, toy values, and it is more than likely more realistic to readjust all weights following the addition of each note, but the underlying concept remains the same.

\begin{figure}
    \centering
    \includegraphics[scale=0.4]{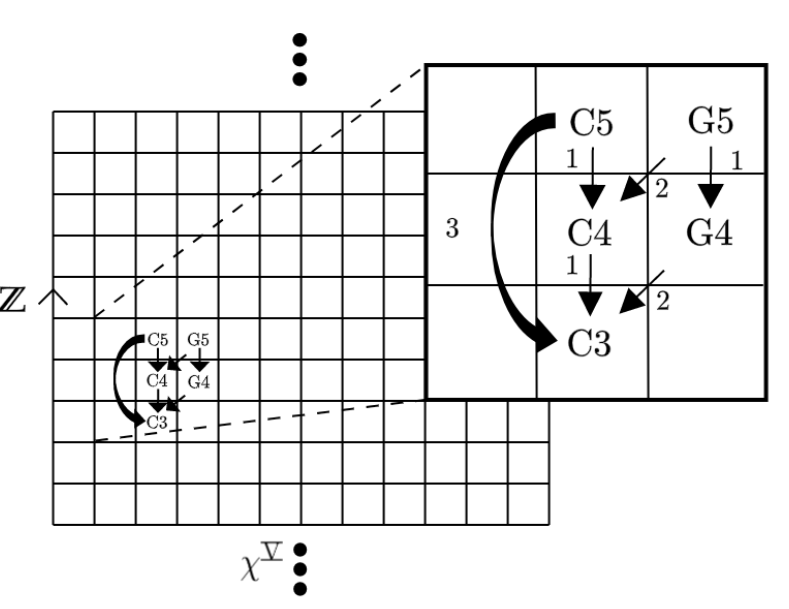}
    \caption{Visualisation of the graphical structure overlaid onto the grid.}
    \label{fig:graphgrid}
\end{figure}

\begin{figure}
    \centering
    \includegraphics[scale=0.4]{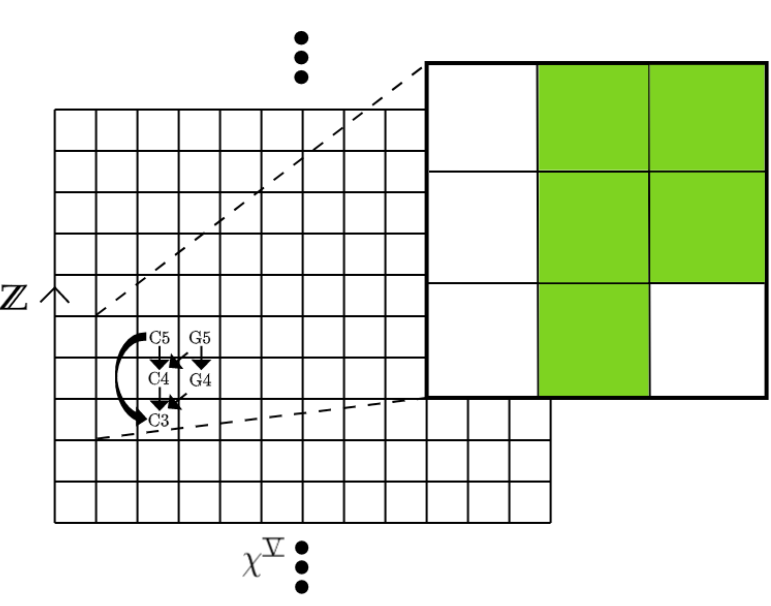}
    \caption{Visualisation of the final grid structure, where green cells represent a Boolean value of true.}
    \label{fig:gridfinal}
\end{figure}

Further, one can imagine that trying to represent a large number of nodes (and therefore a likely larger number of edges) renders this representation visually messy, and hard to follow or decipher. By placing each node into a grid (with the horizontal axis representing the pitch chromas ordered according to the circle of fifths, and the vertical axis representing the pitch height), and restricting the edges to only the first three harmonics, this is alleviated. From here, it becomes clear that it is in fact possible to dispose of the notion of this representation of a graph altogether: by removing the edges (the information from which becomes implicit), and instead assigning a Boolean value to each cell, representing whether the note is audible or not (Figure~\ref{fig:gridfinal}). 
In order to apply any proposed algorithms, it is necessary to convert the audio signal into the proposed format. Doing so effectively will require reasonably extensive preprocessing (the full extent of which is beyond the scope of this paper), which will need to include,
\begin{itemize}
    \item taking the signal from the time-domain to the frequency domain, and discretising the frequency-domain signal (by binning the continuous signal into the semitones of the western musical scale), likely through a constant-Q transform \citep{brown1992efficient}, or other Non-Stationary Gabor Transform (NSGT);
    \item eliminating the inharmonic part of the signal (in any domain). NB: the more effectively this noise is reduced, the better the algorithms will perform (as is true of most approaches); and
    \item translating the resulting signal into the proposed grid format (Algorithm~\ref{alg:interpalg}). Computationally, this can be efficiently represented as a sparse matrix of 0s and 1s,
\end{itemize}
in some order.

The problem of pitch detection is then reduced to the problem of finding the decomposition of the grid (into shapes, or configurations) that corresponds to the notes played in the input signal. As becomes apparent, this involves discarding a number of false positive cases from the interpretation. From a graphical perspective, this is somewhat equivalent to identifying the nodes that correspond to fundamentals, removing them, and repeating the process until no more are present.

\begin{figure}
    \centering
        \includegraphics[scale=0.55]{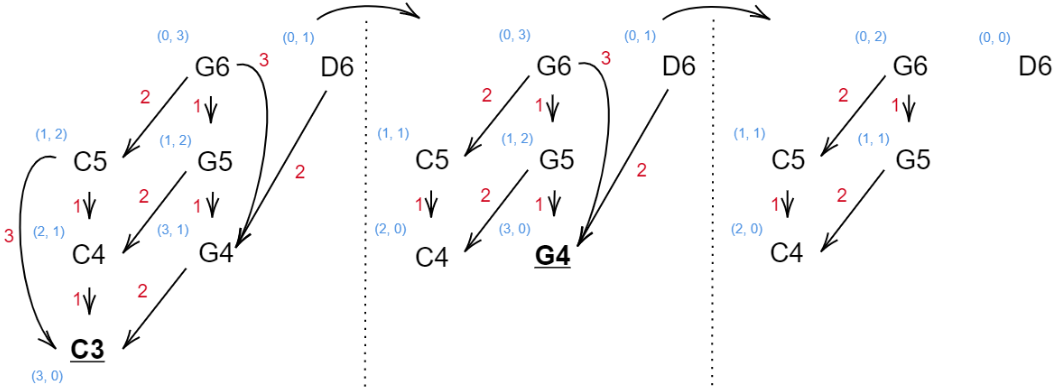}
    \caption{(\textit{left}) A directed graph depicting C3 and G4 (and each of their first three harmonics sounding). The degrees of each node are shown in blue. The following steps represent the steps of the simple algorithm described. The bolded/underlined note at each step is the one selected as a fundamental.}
    \label{fig:graph_alg}
\end{figure}

For example, consider the graph in which (C, 3) and (G, 4) are sounding along with their first three harmonics (Figure~\ref{fig:graph_alg}). By annotating each node, $\nu$, with its indegree ($\text{deg}^{-}(\nu)$) and outdegree ($\text{deg}^{+}(\nu)$), some nodes present as sinks (i.e. with degree $(n^-, 0)$ for some $n^->0$), and some as sources (i.e. with degree $(0,n^+)$ for some $n^+>0$). For the purposes of pitch estimation, and because of the chosen edge direction (from harmonic to fundamental), a sink with indegree three is always a fundamental with its first three harmonics present. Thus, a simple (yet somewhat effective algorithm) is to take each node with indegree three (for each distinct part of the graph, as it may not be connected), categorise them as fundamentals, and remove all categorised nodes. This can then be iteratively applied to the graph until no sinks with indegree three remain (as shown in Figure~\ref{fig:graph_alg}). Clearly this algorithm is a vast oversimplification of the problem, but it nicely illustrates the benefits of visual approaches.

The following sections will build upon this grid-based model, proving some useful properties about it, and presenting some algorithms that utilise them.

\begin{figure}
    \centering
    \includegraphics[scale=0.7]{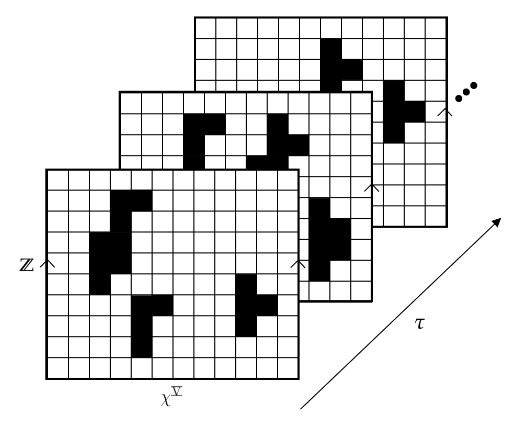}
    \caption{Visualisation of a sequence of interpretations (representing temporal slices), indexed by $\tau$.}
    \label{fig:time_grids}
\end{figure}

\section{The Proposed Model}\label{sec:proposed}
As in Section~\ref{sec:reaching}, let $\chromas$ be the set of pitch chromas, ordered by fifths - $\{C,G,D,A,...,F\}$. Let the grid, $\notes \coloneqq \chromas \times \mathbb{Z}$, where any note in the western harmonic scale is a point in $\notes$, i.e  a pitch chroma/pitch height pair $\nu_{i,j} \in \notes$. That is, $\nu_{i,j}$ represents pitch chroma indexed by $i\in\chromas$ (i.e. the circle of fifths), with pitch height $j \in \mathbb{Z}$. $\notes$ forms the backbone of the model. As the circle of fifths exhibits a periodic nature, the left and right edges of the grid may be identified, or \textit{glued}, to realise $\notes$ as a discretised infinite cylinder. Further, define the predicate, $\interp_{\tau}\colon\notes\to\mathbb{B}$,
\begin{align}
\mathcal{I}_{\tau}(\notese)  =  \left\{ \begin{array}{cc} 
                \top & \hspace{5mm} \text{ if  } \notese \text{ is observed at $\tau$} \\
                \bot & \hspace{5mm} \text{ if  } \notese \text{ is not observed at $\tau$. }  \\
                \end{array} \right.
\end{align}
This is called an interpretation (i.e. of $\notes$), and can be seen as a single time slice of a signal, indexed by an instantaneous point in time, $\tau$. By viewing a musical signal as a sequence of temporal slices, an interpretation of the signal for any given $\tau$ by the pair $(\notes, \mathcal{I}_{\tau})$ is obtained (Figure~\ref{fig:time_grids}).

By viewing the ordered collection of pairs as a whole, one can uniformly stretch each slice, and identify the appropriate $\notes$ faces to create a three-dimensional heatmap, with each note now represented by a cube as opposed to a square. Thus, the interpretations are now indexed by an interval, where previously they had been indexed by an instantaneous point in time, that is, 

\begin{equation}
    (\notes, \mathcal{I}_{\tau}) \longmapsto (\notes, \mathcal{I}_{[\tau, \tau + \text{$1$})}).
\end{equation}

In general, when referring to \textit{any} interpretation henceforth, $\mathcal{I}_{\tau}$ may be replaced by $\mathcal{I}$ for simplicity.

By projecting onto hyperplanes parallel to the faces of the cuboid, one can consider the signal from different perspectives - that is, with constant time, constant pitch chroma, or with constant pitch height. For example, considering the projection with constant pitch height, one can elicit a pitch contour representation of the signal. Further, by viewing the heatmap as a translucent construction, it is possible to consider all aspects simultaneously. 

Let $f_i$ denote the $i^{\text{th}}$ harmonic, with $f_0$ being the corresponding fundamental. Depending on the pitch chroma, the 2$^{\text{nd}}$ harmonic, $f_{2}$, may or may not cross the octave boundary (i.e. be in the next octave up from the 1$^{\text{st}}$ harmonic). For example, the first three harmonics of $(C, n)$ are $(C, n + 1)$, $(G, n + 1)$, and $(C, n + 2)$, whereas the first three harmonics of $(F, n)$ are $(F, n + 1)$, $(C, n + 2)$, and $(F, n + 2)$. By considering each chroma in $\chromas$, it is clear that the presence of $\{f_0, f_1, f_2, f_3  \} \subset \notes$ make up one of two shapes; a turnstile shape, $\vdash$, or a gamma shape, $\Gamma$, depending on the position of the fundamental. Denote the set $$ \chi_{\vdash} = \{C, C\sharp, D, E\flat, E \}, \quad \text{with} \quad   \chi_{\Gamma}  =  \chromas \setminus \chi_{\vdash} \quad \text{as its complement, }$$
 and let $\pi_{\chi}$, $\pi_{y}$ be the projection of $\notes$ onto the horizontal and vertical axes respectively. Then, when $\pi_{\chi}(f_0) \in \chi_{\vdash}$ one observes the $\vdash$ shape, and $\Gamma$ otherwise.
 
 This is shown on Figure \ref{fig:gamma_and_turnstile}, where a fundamental is denoted by $\bullet$ and its harmonics by $\times$. 

\begin{figure}
    \centering
    \includegraphics[scale=0.7]{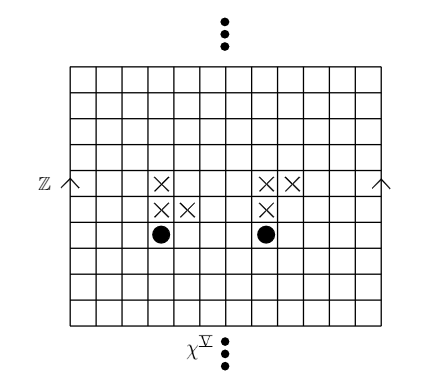}
    \caption{Demonstration of the $\vdash$ and $\Gamma$ shapes in $\notes$.}
    \label{fig:gamma_and_turnstile}
\end{figure}

This model (particularly the use of $\chromas$ as opposed to a chromatically-ordered column set) is chosen such that the pattern exhibited by a fundamental and its first three harmonics (i.e. $\vdash/\Gamma$) appears spatially-compact. This serves to make these patterns more easily discernible, and is also of use when looking to decompose more complicated polyphonic signals into their constituent parts.

The different cells, or notes, on the cylinder can be related to each other by considering a group action on $ \notes$. Let $\delta$ and $\omega$ denote the generators of $\mathbb{Z}_{12}$ (the integers modulo 12) and $\mathbb{Z}$, respectively. Then define a group action $\mathbb{Z}_{12} \times \mathbb{Z} \circlearrowleft \notes$ as follows. $\delta$ and $\omega$ induce maps on $\notes$ by

\begin{align*}
  (\delta, \mathbbm{1}_{\mathbb{Z}}): \  & \notes \rightarrow \notes \qquad \ {(\mathbbm{1}_{\mathbb{Z}_{12}},\omega): \notes \rightarrow \notes\ \ \ }\\
&  \notese \mapsto \nu_{i+1,j},  \qquad \qquad  \qquad   \notese \mapsto \nu_{i,j+1},
\end{align*}
where $\mathbbm{1}_{\mathbb{Z}}$ and $\mathbbm{1}_{\mathbb{Z}_{12}}$ are the identity elements in $\mathbb{Z}$ and $\mathbb{Z}_{12}$, respectively. In other words, $(\delta, \mathbbm{1}_{\mathbb{Z}})$ acts on the cylinder by rotating it clockwise by one cell, while applying $(\mathbbm{1}_{\mathbb{Z}_{12}}, \omega)$ corresponds to  a vertical shift of one cell downwards. Hence a map
\begin{equation}
    \mathbb{Z}_{12} \times \mathbb{Z} \times \notes \rightarrow \notes : \ (\delta^k, \omega^l, \notese) \mapsto \nu_{i+k, j+l},
\end{equation}
is achieved, where $k,l \in \mathbb{Z}$, and $\delta^{12 n}$ for any integer $n$ is the identity. For notational simplicity $(\delta, \mathbbm{1}_{\mathbb{Z}})$  is identified with $\delta$, and similarly for $\omega$. Note that this means that,  relative to a reference point, $\delta$ translates the note by a fifth, and $\omega$ moves the note up an octave. 

In terms of this action,
\begin{equation*}
    \omega(f_0) = f_1, \qquad  \omega \delta (f_0) = f_2,\qquad \text{and} \qquad \omega^2 (f_0) = f_3,
\end{equation*}
for $\pi_{\chi}(f_0) \in  \chi_{\vdash}$, where $\omega \circ \delta$ is identified with $\omega \delta$.  For $\pi_{\chi}(f_0) \in \chi_{\Gamma}$  the above holds with the exception of $f_2$ which in this case is given by
\begin{equation*}
    \omega^2 \delta (f_0) = f_2.
\end{equation*}

Furthermore, using this action, the $\vdash$ and $\Gamma$ shapes may be written as,
\begin{equation}\vdash \, = \{\mathbbm{1},\omega,\omega\delta,\omega^{2}  \}, \qquad \Gamma = \{\mathbbm{1},\omega,\omega^2\delta,\omega^{2}  \},
\end{equation}
where it is understood that by applying all elements of $\vdash$ to a note traces out the turnstile shape, and similarly for $\Gamma$. In other words, considering the $\vdash$ case,  for each fundamental which is mapped to $\top$ by $\mathcal{I}$, there exist harmonics $\omega (f_{0})$,  $\omega\delta (f_{0})$, and $\omega^{2}(f_{0})$ such that each of these are also mapped to $\top$ by $\mathcal{I}$,
\begin{equation} 
    \forall_{\nu\in\notes}\big[(\mathcal{F}(\nu))\to(\mathcal{I}(\omega(\nu))\land\mathcal{I}(\omega\delta(\nu))\land\mathcal{I}(\omega^{2}(\nu)))\big],
\end{equation}
given that $f_{1}$, $f_{2}$, and $f_{3}$ are observed (audible). Here $\mathcal{F}(\nu)$ is a predicate that returns $\top$ iff $\nu$ is a fundamental. Of course, such a construct does not exist in practice, but in essence, the end result of a perfect pitch estimation algorithm is this function, such that it best describes the ground truth of the signal. As before, the $\Gamma$ case is equivalent under the replacement $\omega \delta \mapsto \omega^2 \delta$. 

By observation of the corresponding $\vdash$ and $\Gamma$ shapes over the circle of fifths, it is noted that three two-shape configurations exist - namely $\Gamma\Gamma$, $\Gamma\vdash$, and $\vdash\Gamma$. These are used to categorise a number of properties of the model. 

\begin{definition}[Configuration]
A configuration (denoted as $\Gamma\Gamma$, $\Gamma\vdash$, or $\vdash\Gamma$) represents the shapes generated by fundamentals residing in two adjacent columns in $\notes$. When referring to a false fundamental, $\otimes$, the second column of the configuration always corresponds to that in which $\otimes$ lies. Thus, the first column corresponds to the preceding one - i.e. $\pi_{\chi}(\delta^{-1}(\otimes))$.
\end{definition}

Whilst every fundamental, together with its first three harmonics, exhibit one of the two aforementioned shapes (i.e $\Gamma$ or $\vdash$), the inverse statement is not true. Namely, the presence of a $\vdash$ or $\Gamma$ shape does not imply that the note concerned is a fundamental, as shown in Figure \ref{fig:counterexample}. Here, $\otimes$ denotes a harmonic which presents as a fundamental. Such harmonics are called false fundamentals.

\begin{figure}
    \centering
    \includegraphics[scale=0.45]{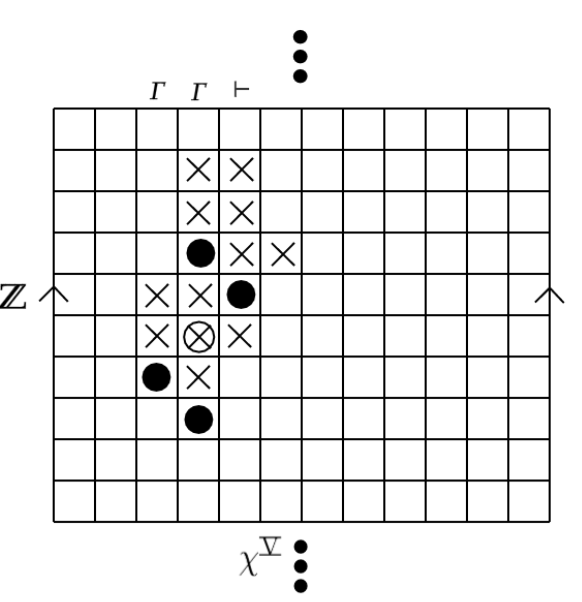}
    \caption{Counterexample showing that not all $\vdash$ shape-exhibiting notes are fundamentals.}
    \label{fig:counterexample}
\end{figure}

Similarly to how a fundamental and its first three harmonics corresponds to either a $\vdash$ or a $\Gamma$, a given harmonic could only have arisen from a fundamental related to it by the \textit{inverse} shape, i.e either  \reflectbox{$\vdash$} or  \reflectbox{L}.

The inverse shapes may be written as, 
\begin{subequations}
\begin{equation}
    \text{\reflectbox{$\vdash$}} = \{\sigma^{-1} |\ \sigma \in\ \vdash  \} = \{\mathbbm{1}, \omega^{-1}, (\omega \delta)^{-1}, \omega^{-2}  \},
\end{equation}
\begin{equation}
    \text{\reflectbox{L}} = \{\sigma^{-1} |\ \sigma \in \Gamma  \} = \{\mathbbm{1}, \omega^{-1}, (\omega^2 \delta)^{-1}, \omega^{-2}  \},
\end{equation}
\end{subequations}

where $\sigma^{-1}$ is the inverse of $\sigma$ with respect to the group structure.

Additionally, define the function $\Psi(\chi_i)$ for any $\chi_i \in \chromas$ as,
\begin{align}
\Psi(\chi_i) =\left\{ \begin{array}{cc} 
                \vdash & \hspace{5mm} \text{ if } \chi_i \in \chi_{\vdash} \\
                \Gamma & \hspace{5mm} \text{ otherwise} . \\
                \end{array} \right.
\end{align}

Intuitively this function takes a given chroma (i.e. $\pi_{\chi}(\nu)\in\chromas$), and returns the set of group elements that trace out the corresponding shape when applied to a note with this chroma. 

The generator of a note is defined as the fundamental that deposited the corresponding frequency. Note that the generators of a note may sit both in the same, or proceeding column to itself. This means then when enumerating the possible generators in most cases (i.e. not $\Gamma\Gamma$), it is necessary to consider notes in \reflectbox{$\vdash$} $\bigcup$ \reflectbox{L}. In many cases there may be multiple generators for a single note. 

\begin{figure}
    \centering
    \includegraphics[scale=0.45]{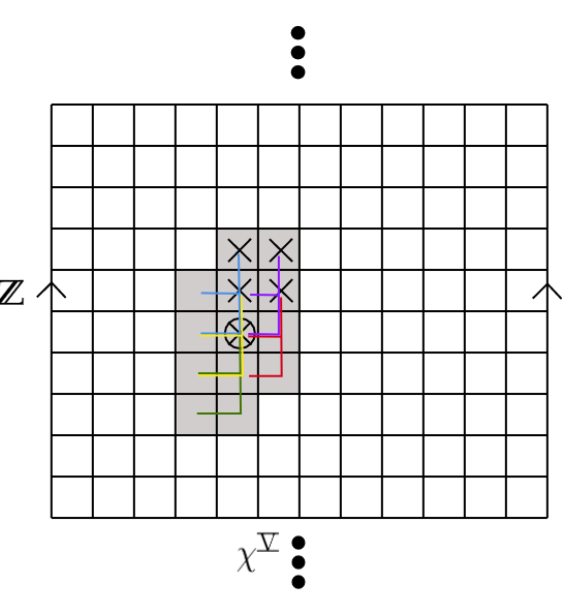}
    \caption{Figure showing where a false fundamental and each of its apparent harmonics could have been generated from, with colours tracing out \reflectbox{$\vdash$} and \reflectbox{L} for each note (overlaying both $\Gamma$ and $\vdash$ situations).}
    \label{fig:neighbourhood}
\end{figure}

Suppose an interpretation is is given such that a false fundamental is present. By investigating the possible positions for this note in $\notes$, there is a finite region containing the fundamentals that could have created the false fundamental and its first three apparent harmonics. In any of the possible positions for the false fundamental, this region is contained in the region given by $\varhexagon (\otimes)$, with $\varhexagon \coloneqq \{\sigma' \sigma|\sigma' \in  \text{\reflectbox{$\vdash$}}
 \bigcup \text{\reflectbox{L}}, \sigma \in \vdash \bigcup \Gamma \}$, as shown in Figure \ref{fig:neighbourhood}. This holds by construction. 
 
 Consequently, in order to check whether a note could be generated by some other note, it is sufficient to search a $3\times5$ area centred on the note. Though this proves sufficient from the perspective of generators, there are a number of edge cases (i.e. false fundamentals) that will naively result in false positives. 

Hence, the problem of multi-pitch estimation is reduced to that of distinguishing between fundamentals ($\bullet$) and harmonics masquerading as fundamentals ($\otimes$).
 
 \section{Fantastic Edge Cases (and Where to Find Them)}\label{sec:edge}
This section considers only cases in which a single false fundamental ($\otimes$) is present. In reality, this is expected to be enough of a generalisation as long as algorithms consider false fundamentals sequentially, such that $\notes$ is traversed along 
\begin{equation}
    (\delta^{i}(\omega\delta^{12})^{j-1})(1,1),\ \forall_{i\in\{1,\cdots,12\}}\forall_{j\in\{1,\cdots,10\}},
\end{equation}
that is, left-to-right, bottom-to-top.

\begin{definition}[Edge Case]

An edge case is a set of notes (i.e. fundamentals and their first three harmonics), in which a note that presents as a fundamental, is in fact not one - as seen in Figure~\ref{fig:counterexample}. That is,
$$\otimes(\nu)\ =\ \big(\forall_{\sigma\in\Psi(\nu)}\big[\mathcal{I}(\sigma\nu)\big]\big)\land \neg\mathcal{F}(\nu).$$

\end{definition}

By considering \reflectbox{$\vdash$} $\bigcup$ \reflectbox{L} for each constituent note (similar to Figure~\ref{fig:neighbourhood}), it is possible to construct logical expressions for the generators of any edge case. Through knowledge of the specific configuration (which is always known for a given note), it is possible to use the appropriate subset of \reflectbox{$\vdash$} $\bigcup$ \reflectbox{L}. The logical expressions are as follows,

\begin{subequations}
\begin{equation*}
    f_0 = \omega^{-1} \lor \, \omega^{-2} \lor \left\{
\begin{array}{ll}
      (\omega\delta)^{-1}  & \text{ if } \Psi(\pi_{\chi}(f_0)-1)\ =\ \vdash \\ 
      \omega^{-2}\delta^{-1} & \text{ otherwise} , \\
\end{array} 
\right. 
\end{equation*}
\begin{equation*}
    f_1 = \omega \lor \omega^{-1} \lor \left\{
\begin{array}{ll}
     \delta^{-1}  & \text{ if } \Psi(\pi_{\chi}(f_0)-1)\ =\ \vdash \\
      (\omega\delta)^{-1} & \text{ otherwise} , \\
\end{array} 
\right.
\end{equation*}
\begin{equation*}
    f_2 = \omega\delta \lor \, \delta \lor \left\{
\begin{array}{ll}
      \omega^{-1}\delta & \text{ if } \Psi(\pi_{\chi}(f_2))\ =\ \vdash \\ 
      \omega^{2}\delta & \text{otherwise}, \\
\end{array} 
\right.
\end{equation*}
\begin{equation*}
    f_3 = \omega \lor \, \omega^{2} \lor \left\{
\begin{array}{ll}
      \omega \delta^{-1}  & \text{ if } \Psi(\pi_{\chi}(f_0)-1)\ =\ \vdash \\ 
      \delta^{-1} & \text{ otherwise}. \\
\end{array} 
\right.
\end{equation*}
\end{subequations}
Note that these actions are all relative to the fundamental, and the interpretation functions are omitted for brevity - that is, $\mathcal{I}_{\tau}(\sigma)$ $\mapsto$ $\sigma$ for all $\sigma = \omega^{i}\delta^{j}$, $(i, j)\in\mathbb{Z}\times\mathbb{Z}$.

Further, it is possible to define basic edge cases;

\begin{definition}[Basic Edge Case]
 A basic edge case is an edge case such that each constituent note has at most one generator.
\end{definition}
More specifically, this means that each apparent harmonic (including the false fundamental itself) has \textit{precisely} one generator (as if any part had zero generators it wouldn't be a false fundamental). This can be trivially represented by replacing the logical ors ($\lor$) with xors ($\oplus$) - for example for $f_0$,

\begin{equation*}
    f_0 = \omega^{-1} \oplus \, \omega^{-2} \oplus \left\{
\begin{array}{ll}
      (\omega\delta)^{-1}  & \text{ if } \Psi(\pi_{\chi}(f_0)-1)\ =\ \vdash \\ 
      \omega^{-2}\delta^{-1} & \text{ otherwise}. \\
\end{array} 
\right. 
\end{equation*}

The same replacement holds for $f_1$, $f_2$, and $f_3$.

Following from this, it is possible to enumerate every basic edge case for a specific configuration, and ascertain the total number. Initially the answer for four choices, each with three options would simply be $3^4 = 81$. Due to overlap in which generators satisfy the constituent, however, the actual result is significantly lower, and can be enumerated with a simple counting method (Table~\ref{tab:counting}). Note that $f_2$ has been omitted as it has no overlap (and therefore, multiplying the end result by 3 is sufficient).

\begin{table}[]
\centering
\begin{tabular}{@{}lll@{}}
\toprule
\textbf{$f_0$}        & \textbf{$f_1$} & \textbf{$f_3$}      \\ \midrule
$\omega^{-1}$         & -              & $\omega^2$          \\
                      &                & $\omega\delta^{-1}$ \\
$\omega^{-2}$         & $\omega$       & -                   \\
                      & $\delta^{-1}$  & $\omega^2$          \\
                      &                & $\omega\delta^{-1}$ \\
$(\omega\delta)^{-1}$ & $\omega$       & -                   \\
                      & $\delta^{-1}$  & $\omega^2$          \\
                      &                & $\omega\delta^{-1}$ \\ \midrule
                      & Total          & 8 $\times$ 3  = 24 \\ \bottomrule \\
\end{tabular}
\caption{Table showing the enumeration of possible basic edge cases for the $\vdash\Gamma$ configuration.}
\label{tab:counting}
\end{table}

The same holds true for the $\Gamma\Gamma$ and $\Gamma\vdash$ configurations, as although the composed actions are different, there are still the same overlaps ($f_0/f_1: \omega^{-1},\ \text{and}\ f_1/f_3: \omega$), and the same number of overall choices.

One might be tempted to claim, therefore, that there are $24\times 3 = 72$ basic edge cases. Though technically this may be true, the choice was made to instead define a number of basic edge types - similar to Lent Davis and Maclagan's definition of cap types regarding the card game SET \citep{davis2003card}. This allows for comparisons to be made irrespective of configuration. Further, there are a number of invariants that hold across all configurations, for all basic edge types. 

Let 
\begin{equation}
    g: \notes\to\notes
\end{equation} 

be the map sending a note to its generating fundamental.

Further consider the triple $g(f_0,\ f_1,\ f_3) = (g(f_0),\ g(f_1),\ g(f_3))$ constructed from letting $g$ act on a fundamental and its first and third harmonics. It is not necessary to consider $f_2$ as it has no overlap with the other constituent parts (as shown below, with the overlaps made bold for clarity),
\begin{subequations}
\begin{alignat}{3}
    f_0:\qquad& \bm{\omega^{-1}}&&\oplus\omega^{-2}& &\oplus(\omega\delta)^{-1} \\
    f_1:\qquad& \bm{\omega}&&\oplus\bm{\omega^{-1}}& &\oplus\delta^{-1} \\
    f_2:\qquad& \omega^2\delta&&\oplus\omega\delta& &\oplus\delta \\ 
    f_3:\qquad& \bm{\omega}&&\oplus\omega^2& &\oplus\omega\delta^{-1}.
\end{alignat}
\end{subequations}

Any basic edge case can be associated with such a triple, which corresponds to the generating set of the false fundamental and its first and third apparent harmonics. 

\begin{definition}
(Basic Edge Type)

Two basic edge cases are of the same \textit{type} iff their two corresponding triples are related by
\begin{equation}
    \delta\mapsto\omega^k\delta,\ k\in\{-1,0,1\}.
\end{equation}
\label{def:basictype}
\end{definition}

\begin{remark}
Note that if a basic edge case is obtained from another through $\delta\mapsto\omega^k\delta$, then it is possible to move in the opposite direction using the inverse map $\delta\mapsto\omega^{-k}\delta$.
\end{remark}

For any given type, there will be precisely \textit{three} members - with precisely one being associated to each of the three configurations. The members of each type (by configuration) are related as according to Figure~\ref{fig:commutativity}. 

\begin{figure}
    \centering
    \includegraphics[scale=0.6]{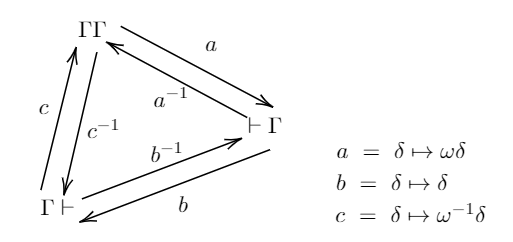}
    \caption{Diagram showing the relationships between members of the same type between different configurations.}
    \label{fig:commutativity}
\end{figure}

This becomes clearer following inspection of Table~\ref{tab:invariants}. Intuitively, the only difference between the $\vdash$ and $\Gamma$ shapes is the shifting of $f_2$. The $f_2$ in a $\Gamma$ shape is obtained from the corresponding $\vdash$ shape by acting $\omega$ on its $f_2$, and its inverse on the contrary.

\begin{lemma}
Being of the same type is an equivalence relation
\end{lemma}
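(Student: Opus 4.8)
The plan is to verify the three defining properties of an equivalence relation directly on the triples $g(f_0,f_1,f_3)$ associated with basic edge cases, where $T\sim T'$ means $T'$ is obtained from $T$ by the substitution $\delta\mapsto\omega^k\delta$ for some $k\in\{-1,0,1\}$ (Definition~\ref{def:basictype}). It is convenient to first record the effect of the substitution on a single group element: since $\mathbb{Z}_{12}\times\mathbb{Z}$ is abelian, $\delta\mapsto\omega^k\delta$ sends $\omega^a\delta^b$ to $\omega^{a+kb}\delta^b$, so writing $\phi_k$ for the induced map on triples (applied entrywise) one has $\phi_0=\mathrm{id}$, $\phi_k\circ\phi_{k'}=\phi_{k+k'}$, and $\phi_k^{-1}=\phi_{-k}$. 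Reflexivity is then immediate from $k=0$, since $\phi_0(T)=T$. Symmetry is exactly the content of the Remark following Definition~\ref{def:basictype}: if $T'=\phi_k(T)$ with $k\in\{-1,0,1\}$ then $T=\phi_{-k}(T')$, and the admissible index set $\{-1,0,1\}$ is closed under negation, so $-k$ is again admissible.

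Transitivity is the only substantive step, and I expect it to be the main obstacle. Suppose $T_2=\phi_{k_1}(T_1)$ and $T_3=\phi_{k_2}(T_2)$ with $k_1,k_2\in\{-1,0,1\}$; the composition law gives $T_3=\phi_{k_1+k_2}(T_1)$, but $k_1+k_2$ may equal $\pm 2$, which is \emph{not} in $\{-1,0,1\}$. Hence transitivity does not follow from the composition law alone — one cannot simply add the indices — and the free action of $\phi$ on $\omega$-exponents shows the gap is real rather than a notational artefact. To close it I would invoke the configuration structure: every basic edge case belongs to exactly one of the three configurations $\Gamma\Gamma$, $\Gamma\vdash$, $\vdash\Gamma$, and, as encoded in the commutativity diagram of Figure~\ref{fig:commutativity} and the invariants tabulated in Table~\ref{tab:invariants}, a single admissible substitution carries a member of a type to the member of that same type lying in an adjacent configuration. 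Because a type contains precisely one member per configuration and there are only three configurations, a chain $T_1\to T_2\to T_3$ of two admissible steps between genuine basic edge cases cannot reach a configuration outside these three; $T_3$ must therefore coincide with the unique member of its type in its own configuration, which is related to $T_1$ by a single admissible substitution (with $k_1+k_2=0$ giving $T_3=T_1$).

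The cleanest way to make the transitivity argument rigorous is to turn the structural observation into a finite verification. I would combine the explicit enumeration of the three members of an arbitrary type — one for each configuration, read off from Table~\ref{tab:invariants} — with the entrywise formula $\phi_k(\omega^a\delta^b)=\omega^{a+kb}\delta^b$, and check in each of the finitely many cases that whenever two admissible steps connect valid triples the composite index can be taken inside $\{-1,0,1\}$. This reduces the one nontrivial axiom to a bounded case check, while reflexivity and symmetry are handled formally as above, completing the proof that being of the same type is an equivalence relation.
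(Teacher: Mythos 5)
Your proposal follows essentially the same route as the paper: reflexivity from $k=0$, symmetry from $k\mapsto -k$, and transitivity from the configuration structure encoded in Figure~\ref{fig:commutativity}. In fact, your handling of transitivity is more careful than the paper's one-line appeal to that diagram — you correctly identify that composite indices $k_1+k_2=\pm 2$ escape $\{-1,0,1\}$, and close the gap using exactly what the diagram implicitly supplies (one member of each type per configuration, verifiable by a finite check against Table~\ref{tab:invariants}).
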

\begin{proof}
For equivalence, the relation must be reflexive, symmetric, and transitive. The reflexive and symmetric properties may be shown by choosing $k=0$, and letting $k \mapsto -k$ in Definition \ref{def:basictype}, respectively. Further, transitivity holds by virtue of the diagram in Figure~\ref{fig:commutativity}.
\end{proof}

\begin{remark}
Note that the types of basic edge case are independent of note configuration. 
\end{remark}

Figure~\ref{fig:basic types} visually represents the eight basic edge types shown in Table~\ref{tab:invariants}.

There are a number of invariants that hold within each type. These are properties which are the same across each edge type, and provide information about their geometric structure. The invariants considered are; back-$\delta$ count ($\backdelta$), edge characteristic ($\epsilon$), and generating structure ($G.S$) which are to be defined in the following. 

\begin{definition}[back-$\delta$ count]

The back-$\delta$ count, $\backdelta$, is a positive integer representing the number of $\delta^{-1}$ occurring in a triple $(g(f_0),\ g(f_1),\ g(f_3))$ associated to to a given edge case.
\end{definition}

\begin{example}
The triple 
$    (g(f_0),\ g(f_1),\ g(f_2)) = (\omega^{-1},\ \omega^{-1},\ \omega\delta^{-1}) 
$
has back-$\delta$ count $\backdelta = 1$.
\end{example}

\begin{definition}[Edge Characteristic]

The edge characteristic, $\epsilon$, is given by the number of distinct fundamentals that are generators for the given note. That is, the number of distinct elements in $(g(f_0),\ g(f_1),\ g(f_3))$.
\end{definition}

\begin{example}
The triple $(g(f_0),\ g(f_1),\ g(f_3)) = (\omega^{-1},\ \omega^{-1},\ \omega\delta^{-1})$ has edge characteristic $\epsilon = 2$.
\end{example}

In addition to $\backdelta$ and $\epsilon$, another invariant is the ``Generating Structure'' (G.S.), which not only considers the number of generating fundamentals, but also which pairs satisfy overlap (i.e. pairs generated by the same note). Naively, there are 5 possible generating structures,
\begin{enumerate}[label=\Roman*.]
\centering
    \item $f_0 = f_1 = f_3$
    \item $f_0 = f_1 \neq f_3$
    \item $f_0 = f_3 \neq f_1$ 
    \item $f_0 \neq f_1 = f_3$ 
    \item $f_0 \neq f_1 \neq f_3$
\end{enumerate}

By construction, cases \Romannum{1} and \Romannum{3} can never occur. Hence, all basic edge cases exhibit a generating structure of either \Romannum{2}, \Romannum{4}, or \Romannum{5}.

These invariants help to distinguish between different basic edge cases, as well as potentially aiding in decomposing non-basic cases into basic ones. The invariants for each class are enumerated in Table~\ref{tab:invariants}. 

\begin{table*}[t]
\centering
\begin{tabular}{@{}cccccc@{}}
\toprule
\multirow{2}{*}{\textbf{Type}} & \multirow{2}{*}{\bm{$|\delta^{-1}|$}} & \multirow{2}{*}{\bm{$\epsilon$}} & \multirow{2}{*}{\textbf{G.S.}} & \multicolumn{2}{c}{\textbf{Cases}}                                                                                     \\
                               &                                           &                                      &                                & \bm{$\Gamma\Gamma / \Gamma\vdash$}                       & \bm{$\vdash\Gamma$}                                 \\ \midrule
\textbf{1}                     & 0                                         & 2                                    & $f_0 = f_1 \neq f_3$           & \multicolumn{2}{c}{$\{\omega^{-1}, \omega^2\}$}                                                           \\
\textbf{2}                     &                                           &                                      & $f_0 \neq f_1 = f_3$           & \multicolumn{2}{c}{$\{\omega^{-2}, \omega\}$}                                                                  \\
\textbf{3}                     & 1                                         & 2                                    & $f_0 = f_1 \neq f_3$           & $\{\omega^{-1}, \delta^{-1}\}$          & $\{\omega^{-1}, \omega\delta^{-1}\}$             \\
\textbf{4}                     &                                           &                                      & $f_0 \neq f_1 = f_3$           & $\{\omega^{-2}\delta^{-1}, \omega \}$                & $\{(\omega\delta)^{-1}, \omega \}$              \\
\textbf{5}                     &                                           & 3                                    & $f_0 \neq f_1 \neq f_3$        & $\{\omega^{-2},(\omega\delta)^{-1},\omega^2\}$               & $\{\omega^{-2},\delta^{-1},\omega^2\}$                  \\
\textbf{6}                     & 2                                         & 3                                    & $f_0 \neq f_1 \neq f_3$        & $\{\omega^{-2},(\omega\delta)^{-1}, \delta^{-1}\}$           & $\{\omega^{-2},\delta^{-1}, \omega\delta^{-1}\}$              \\
\textbf{7}                     &                                           &                                      &                                & $\{\omega^{-2}\delta^{-1},(\omega\delta)^{-1},\omega^2\}$    & $\{(\omega\delta)^{-1},\delta^{-1},\omega^2\}$          \\
\textbf{8}                     & 3                                         & 3                                    & $f_0 \neq f_1 \neq f_3$        & $\{\omega^{-2}\delta^{-1},(\omega\delta)^{-1},\delta^{-1}\}$ & $\{(\omega\delta)^{-1},\delta^{-1},\omega\delta^{-1}\}$ \\ \midrule
\multicolumn{1}{l}{}           & \multicolumn{3}{r}{\textbf{Total number of cases}}                                                                & \multicolumn{2}{l}{\textbf{$8\cdot 3 = 24$}}                                                                            \\ \bottomrule \\
\end{tabular}
\caption{Table showing the different types of basic edge case, together with their invariants, and elements (excluding $f_2$).}
\label{tab:invariants}
\end{table*}

\begin{figure}
    \centering
    \includegraphics[scale=0.6]{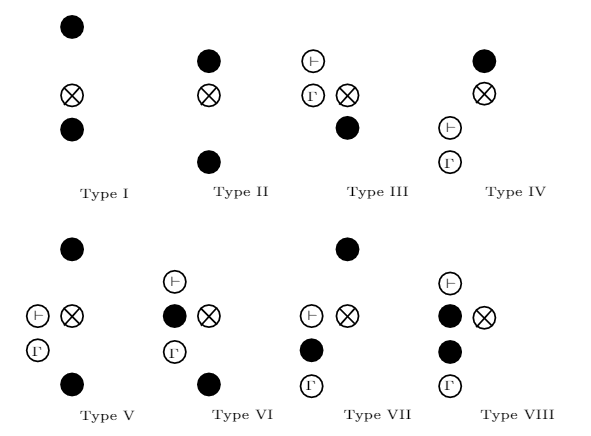}
    \caption{The eight basic edge types represented visually. Each $\bullet$ represents a generator, with $\otimes$ representing the false fundamental, and the unfilled generators containing $\vdash$ or $\Gamma$ denoting the shape drawn out in the $\delta^{-1}$ column (i.e. corresponding to $\Psi(\pi_{\chi}(\delta^{-1}(\otimes)))$.}
    \label{fig:basic types}
\end{figure}

As can be seen in Table~\ref{tab:invariants}, there is only a single case in which the type is not uniquely determined by $\backdelta$, $\epsilon$, and G.S. - namely types 6 and 7. These can be distinguished by considering the position of the generator that sits in the same column as the false fundamental - for type 6, the generator sits below the false fundamental ($\omega^{-2}$), whereas for type 7, the generator sits above the false fundamental ($\omega^2$).

\begin{remark}
Note that, as before, the multiplication by 3 in Table~\ref{tab:invariants} is due to there being no restrictions on the choice of generator for the second harmonic.
\end{remark}

\begin{lemma}
The second ``harmonic" ($f_2$) of a false fundamental ($\otimes$) must be generated from the column directly to the right of the false fundamental (i.e. $\Psi(\pi_{\chi}(\delta(\otimes)))$).
\end{lemma}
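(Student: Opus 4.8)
The plan is to locate the generator of $f_2$ by a chroma-counting argument and then to rule out the single degenerate placement that would escape the claim. The fact doing the heavy lifting is that the apparent second harmonic of $\otimes$ always sits one fifth to the right of $\otimes$, that is, in the column $\pi_{\chi}(\delta(\otimes))$. Indeed $f_2 = \omega\delta(\otimes)$ when $\otimes$ carries the $\vdash$ shape and $f_2 = \omega^2\delta(\otimes)$ otherwise; these two positions differ only in their vertical ($\omega$) component, while the horizontal displacement is a single $\delta$ in both cases. Hence $\pi_{\chi}(f_2) = \pi_{\chi}(\delta(\otimes))$ independently of the shape of $\otimes$, which is the column named in the statement.

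Next I would enumerate how a genuine fundamental $R$ could deposit the frequency sitting at $f_2$ (which, being part of the false-fundamental pattern, is observed and so has at least one generator). The note $f_2$ must coincide with one of the four entries $R,\ \omega(R),\ \omega\delta(R)\ \text{or}\ \omega^2\delta(R),\ \omega^2(R)$ --- namely $R$'s fundamental, first, second, or third harmonic. For three of these four relations, the fundamental itself and the first and third harmonics, the connecting map is a pure power of $\omega$, which fixes chroma, so $\pi_{\chi}(R) = \pi_{\chi}(f_2) = \pi_{\chi}(\delta(\otimes))$ and $R$ indeed lies in the column immediately to the right of $\otimes$. The only relation that could place $R$ elsewhere is the second-harmonic relation, whose map carries a $\delta$ and would therefore drop $R$ one fifth to the left, into $\otimes$'s own column $\pi_{\chi}(\otimes)$.

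The crux, and the step I expect to need the most care, is excluding this last possibility. The argument I would give is that any such $R$ is forced to equal $\otimes$ itself: since $R$ shares a column with $\otimes$ it shares its chroma, hence the same shape under $\Psi$, so its second harmonic is $\omega\delta(R)$ exactly when $f_2 = \omega\delta(\otimes)$ (and $\omega^2\delta(R)$ in the complementary case); equating these fixed positions yields $R = \otimes$. But $\otimes$ is by definition not a genuine fundamental ($\neg\mathcal{F}(\otimes)$), so it cannot be the generating fundamental of any note, and the second-harmonic relation is therefore vacuous. Only the three chroma-preserving relations survive, all of which place the generator in $\pi_{\chi}(\delta(\otimes))$, giving the claim. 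As a consistency check, this is precisely why the generator expression already recorded for $f_2$, namely $\omega\delta \lor \delta \lor (\omega^{-1}\delta\ \text{or}\ \omega^2\delta)$, contains no disjunct lying in $\otimes$'s column: every term there carries a factor of $\delta$ and hence points one column to the right.
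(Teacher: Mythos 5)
Your proof is correct and takes essentially the same route as the paper's: both reduce the problem to the single dangerous case of a generator lying in $\otimes$'s own column via the second-harmonic relation, and both dispatch it by observing that sharing a column means sharing a shape under $\Psi$, which forces that generator to coincide with $\otimes$ itself, contradicting $\neg\mathcal{F}(\otimes)$. The only difference is presentational — the paper states the two-column restriction in a footnote and organises the argument as a case split on the generator's shape ($\chi_{\vdash}$ vs.\ $\chi_{\Gamma}$), whereas you derive that restriction explicitly from the fact that the non-second-harmonic relations are pure powers of $\omega$ and hence chroma-preserving.
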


\begin{proof}

Assume that there exists some generator $g_i \in \notes$ for $f_2$ that lies in the same column as the false fundamental\footnote{Note that the only other choice would be the column directly to the right of $\otimes$.}. There are two possible cases;
\begin{case}
$\pi_{\chi}(g_{i})\in\chi_{\vdash}$

For $g_i$ to generate the $f_2$ at $\omega\delta(\otimes)$, it would have to lie at $\mathbbm{1}(\otimes)$. Hence, $\otimes$ would no longer be a false fundamental - resulting in a contradiction. $\bot$.
\end{case}
\begin{case}
$\pi_{\chi}(g_{i})\in\chi_{\Gamma}$

The same argument as Case 1, applying the map $\omega\delta\mapsto\omega^2\delta$. Hence, $\bot$.
\end{case}

Thus, as $\bot$ is reached in both possible cases, there can be no such generator for $f_2$. Further, as 
\begin{equation}
    \lnot\exists_{x}.P(x) \equiv \forall_{x}.\lnot P(x),
\end{equation}
all generators for any $f_2$ must lie in the column directly to the right of the false fundamental.
\end{proof}

\begin{proposition}
There are 24 basic edge types.
\label{Prop}
\end{proposition}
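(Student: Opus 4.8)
The plan is to count the basic edge types by factorising the data that specifies a basic edge case into two independent pieces: the triple $(g(f_0),\, g(f_1),\, g(f_3))$, which by Definition~\ref{def:basictype} is exactly what determines the type up to the relation $\delta\mapsto\omega^k\delta$, and the generator of the remaining apparent harmonic $f_2$. Since ``being of the same type'' is an equivalence relation whose classes contain precisely one representative in each of the three configurations (by the preceding equivalence-relation lemma together with the commutativity recorded in Figure~\ref{fig:commutativity}), the count is configuration-independent, and the number of types is the product of the number of admissible triples with the number of admissible choices for $f_2$. The target identity is therefore $8\times 3 = 24$.

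To count the triples I would read off the overlap structure displayed just before Definition~\ref{def:basictype}: the only coincidences among the generator options are $\omega^{-1}$, common to $f_0$ and $f_1$, and $\omega$, common to $f_1$ and $f_3$, while $f_0$ and $f_3$ share nothing. Because a fundamental that generates $f_0$ through $\omega^{-1}$ must also generate $f_1$ (and symmetrically $\omega$ couples $f_1$ and $f_3$), each coincidence is a \emph{forced} equality rather than a free choice, i.e.\ $g(f_0)=\omega^{-1}\Leftrightarrow g(f_1)=\omega^{-1}$ and $g(f_1)=\omega\Leftrightarrow g(f_3)=\omega$. Splitting on the three values of $g(f_0)$ and propagating these biconditionals --- as laid out in Table~\ref{tab:counting} --- gives $2+3+3=8$ admissible triples, with generating structures \RNum{1} and \RNum{3} excluded by construction; I would then confirm that these eight coincide with the eight rows of Table~\ref{tab:invariants}, separated by the invariants $\backdelta$, $\epsilon$ and the generating structure, with types $6$ and $7$ distinguished by whether the same-column generator lies below ($\omega^{-2}$) or above ($\omega^{2}$) the false fundamental.

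For $f_2$, the Lemma just proved confines every generator of $f_2$ to the column immediately to the right of $\otimes$, and the displayed overlaps show that $f_2$ shares no generator with $f_0$, $f_1$ or $f_3$; its generator may therefore be chosen freely among its three options, independently of the triple. Multiplying the eight triple-types by these three choices yields $8\times 3 = 24$, and since each resulting type retains exactly one representative per configuration the same total is reached whichever configuration one starts from. The main obstacle is the triple enumeration: the value $8$ depends entirely on reading the two overlaps as forced equalities, so the crux of the argument is justifying the biconditionals above from the fact that a single fundamental produces \emph{all} of its harmonics simultaneously. Once these are secured, the factor of $3$ contributed by $f_2$ is immediate from the Lemma, and the product $8\times 3 = 24$ completes the count.
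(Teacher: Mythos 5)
Your proposal is correct and takes essentially the same approach as the paper: the paper's own proof is just a terse citation of Table~\ref{tab:invariants} and the two preceding lemmas, and your argument unpacks exactly that content---the $2+3+3=8$ admissible triples obtained by treating the overlaps $\omega^{-1}$ (for $f_0/f_1$) and $\omega$ (for $f_1/f_3$) as forced equalities as in Table~\ref{tab:counting}, configuration-independence via the type equivalence relation and Figure~\ref{fig:commutativity}, and the unconstrained factor of $3$ for the generator of $f_2$ supplied by the column lemma, giving $8\times 3=24$.
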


\begin{proof}
This follows from Table~\ref{tab:invariants}, and Lemmas 6.1.1 and 6.1.2.
\end{proof}

\begin{corollary}
There are no 5-fundamental basic edge cases
\end{corollary}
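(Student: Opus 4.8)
The plan is to prove this by a direct counting argument on the constituent notes of a basic edge case, supplemented by a column-separation observation to make the bound sharp. First I would recall that a basic edge case consists of exactly four constituent notes: the false fundamental $f_0 = \otimes$ together with its three apparent harmonics $f_1$, $f_2$, and $f_3$. By the definition of a basic edge case, each of these four notes has \emph{precisely} one generator under the map $g$. Hence the collection of generating fundamentals is $\{g(f_0),\, g(f_1),\, g(f_2),\, g(f_3)\}$, a set of at most four elements. Since a $5$-fundamental basic edge case would by definition require five distinct generating fundamentals, and four notes each admitting a single generator can produce at most four distinct fundamentals, no such case can exist. This already settles the corollary.

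To make the bound sharp and to connect with the invariants already tabulated, I would next verify that four distinct fundamentals is genuinely attainable, so that the cutoff at five is not vacuous. The edge characteristic $\epsilon$ attains its maximum value of $3$ over the triple $(g(f_0),\, g(f_1),\, g(f_3))$, as recorded in Table~\ref{tab:invariants} (the type-$5$ through type-$8$ cases). The remaining generator $g(f_2)$ is deliberately excluded from $\epsilon$, so in principle it could raise the distinct total to four. Appealing to the preceding lemma on $f_2$, its generator $g(f_2)$ must lie in the column directly to the right of $\otimes$, whereas every generator appearing in the triple lies in the column of $\otimes$ itself or in the column $\pi_{\chi}(\delta^{-1}(\otimes))$ to its left. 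Thus $g(f_2)$ is automatically distinct from $g(f_0)$, $g(f_1)$, and $g(f_3)$, and the total number of distinct fundamentals in any basic edge case equals $\epsilon + 1 \in \{3, 4\}$, with the value $4$ reached exactly when $\epsilon = 3$.

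The argument requires no hard computation, so there is no serious obstacle; the only care needed is in correctly fixing the number of constituent notes at four and in confirming that $g(f_2)$ cannot coincide with any of the other three generators. I would handle the latter by the column-separation observation above, which follows immediately from the preceding lemma, while the former is read directly from the definition of the $\vdash$ and $\Gamma$ shapes as $\{f_0, f_1, f_2, f_3\}$. Everything else is bookkeeping against the definition of a basic edge case and the map $g$, so the corollary follows cleanly as a consequence of the established bound $\epsilon \leq 3$ together with the single extra generator contributed by $f_2$.
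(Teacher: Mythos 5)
Your proposal is correct and takes essentially the same route as the paper: a counting/pigeonhole argument over the four constituent notes $f_0$, $f_1$, $f_2$, $f_3$, each of which (by the definition of a basic edge case) has precisely one generator, so at most four distinct fundamentals can occur. The only difference is one of emphasis --- the paper anchors the bound in Table~\ref{tab:invariants} ($\epsilon \leq 3$, plus $g(f_2)$) before invoking pigeonhole, whereas you obtain the bound of four directly from the definition and use the table together with the $f_2$-column lemma only to verify sharpness, which is a slight streamlining rather than a different idea.
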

\begin{proof}
Following from Table~\ref{tab:invariants}, the largest $\epsilon$ is 3. Including the generator for $f_2$, this leads to a 4-fundamental case. By pigeonhole principle, adding another generator would doubly-satisfy at least one note, meaning that it would no longer be a basic edge case. Hence, there are no 5-fundamental basic edge cases.
\end{proof}

It also proves possible to define restrictions on the presence of generators for a false fundamental (with respect to the chroma configuration in which it sits). In order to do this, the minimum number of generators that must fall in certain proximity (such as the von Neumann and Moore neighbourhoods) to a false fundamental can be considered. 
 
 In terms of the operators $\omega, \delta$, the von Neumann- and Moore-neighbourhoods of some note $\nu$, are the notes generated by acting the elements of the sets
 \begin{equation}
    \{\delta^{\pm 1}, \omega^{\pm 1}\} \quad \text{ and } \quad   \{\delta^{\pm 1}, \omega^{\pm 1}, \omega^{\pm 1} \delta^{\pm 1} \},
\end{equation}
on $\nu$, respectively.
 
 \begin{figure*}[]
     \centering
     \includegraphics[scale=0.6]{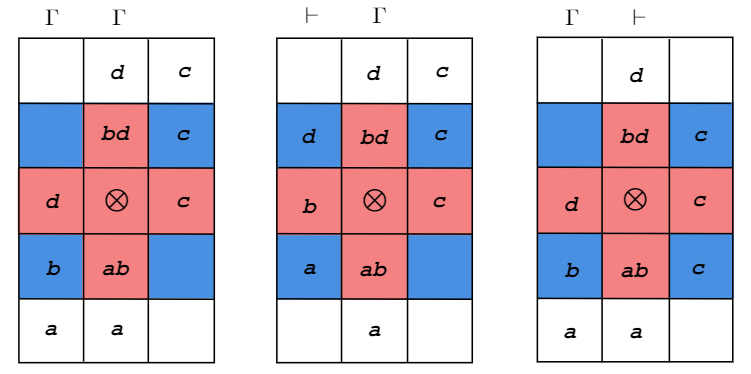}
     \caption{The potential generators (a ($f_0$), b ($f_1$), c ($f_2$), d ($f_3$)) for each configuration, and the von Neumann (red) and Moore (blue) neighbourhoods.}
     \label{fig:neighbourhood_tables}
 \end{figure*}
 
 The problem of choosing basic edge cases with the least generators in these neighbourhoods can be reduced to the problem of choosing some a, b, c, and d (corresponding to generators for $f_0,...,f_3$) from Figure~\ref{fig:neighbourhood_tables}. This can be achieved by choosing generators according to the following order, 
 
 \begin{enumerate}[label=\Roman*.]
    \item Outside both neighbourhoods;
    \item Inside Moore neighbourhood, outside of von Neumann neighbourhood;
    \item Inside von Neumann neighbourhood (and $\therefore$ inside Moore neighbourhood).
\end{enumerate}

If multiple choices are available, it is sufficient to choose any one, without loss of generality, as they have the same effect on the final count as one another. Table~\ref{table:nbh} shows the resulting (minimum) counts of generators in the neighbourhoods for false fundamentals of certain configurations. 

 \begin{table}
 \centering
\begin{tabular}{@{}llll@{}}
\toprule
     & $\Gamma\Gamma$             & $\vdash\Gamma$         & $\Gamma\vdash$             \\ \midrule
a    & $\omega^{-2}$ (I)          & $\omega^{-2}$ (I)      & $\omega^{-2}$ (I)          \\
b    & $(\omega\delta)^{-1}$ (II) & $\delta^{-1}$ (III)    & $(\omega\delta)^{-1}$ (II) \\
c    & $\omega^{2}\delta$ (I)     & $\omega^{2}\delta$ (I) & $\omega\delta$ (II)        \\
d    & $\omega^{2}$ (I)           & $\omega^{2}$ (I)       & $\omega^{2}$ (I)           \\ \midrule
M    & 1                          & 1                      & 2                          \\
v.N. & 0                          & 1                      & 0                          \\ \bottomrule \\
\end{tabular}
\caption{Table showing the minimum generators in the von Neumann (v.N.) and Moore neighbourhoods of a false fundamental given its chroma configuration. \label{table:nbh}}
\end{table}

Though it may seem that choosing a generator that satisfies multiple parts (i.e. $\omega^{-1}$ or $\omega$) may reduce the overall counts, it is always possible in these cases to instead make choices that don't reside in the von Neumann neighbourhood.

Better understanding the occurence of edge cases is an important step towards identifying them in practice, and gives a deeper understanding of the proposed model itself. Sections~\ref{sec:reduction} and~\ref{sec:prevalence} go on to look at reduction of edge cases to potential basic cases, and the experimental prevalence of basic edge types, and Sections~\ref{sec:real} and~\ref{sec:eval} investigate the theoretical basis of the model from a more experimental standpoint.

\section{Reduction and Reducibility of Edge Cases} \label{sec:reduction}
In order to gain a better understanding of the occurrence of edge cases (and therefore the problem of pitch estimation), it proves useful to be able to classify edge cases by which basic edge types they are related to. In order to achieve this, it is necessary to reduce edge cases (i.e. remove redundancy) by removing potential generators such that the false fundamental in question is still preserved. 

\begin{figure}
    \centering
    \includegraphics[scale=0.6]{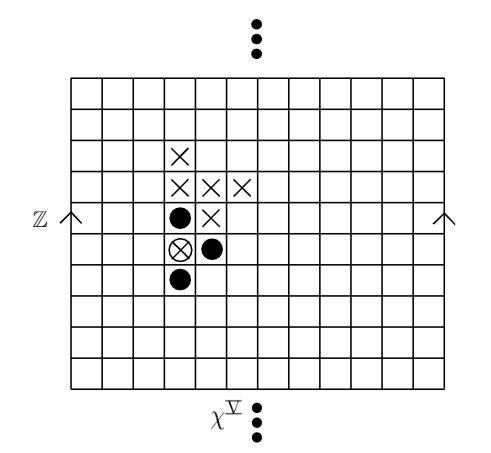}
    \caption{An irreducible non-basic edge case.}
    \label{fig:edge_edge}
\end{figure}

Given a set of generators, $\mathcal{G}$, that lie in $\varhexagon(\otimes)$ for some false fundamental, they are reducible iff any part, $f_n\ \in\ \{f_0,\ f_1,\ f_2,\ f_3\}$ is satisfied more than once (i.e. non-basic) - barring the exception outlined below. Reduction (denoted as $\rightarrow_{g}$) takes $\mathcal{G}$, and removes some given generator $g\ \in\ \mathcal{G}$ such that the false fundamental is still satisfied by $\mathcal{G}\setminus g$,
\begin{equation}
    \mathcal{G}\ \rightarrow_{g}\ \mathcal{G}\setminus g.
\end{equation}
Such a removal of a generator seeks only to remove its `fundamentalness' - it is entirely possible that it could still be generated elsewhere. Indeed, this must be the case for any reduction via a generator in $\vdash \bigcup\ \Gamma$, such as in Figure~\ref{fig:fund_reduction}. Note that reduction is not unique; there may be multiple valid reductions that can be applied to a given set of generators.

\begin{figure}
    \centering
    \includegraphics[scale=0.75]{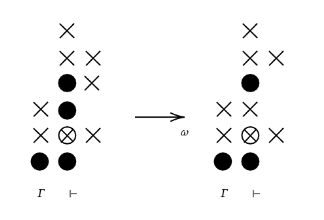}
    \caption{A reduction removing a generator in $\vdash \bigcup\ \Gamma$. Note that $g(f_2)$ is omitted for brevity.}
    \label{fig:fund_reduction}
\end{figure}

It would be reasonable to assume that any non-basic edge case can be reduced, therefore, to one of the eight basic edge cases (Table~\ref{tab:invariants}). On the contrary, however, there exists a case that is both non-basic (i.e. each of its parts are not satisfied exactly once) and irreducible - that is, that no potential generator could be removed whilst preserving the false fundamental (Figure~\ref{fig:edge_edge}). Importantly however, this is the only irreducible non-basic edge case.

\begin{proposition}
The only irreducible non-basic edge cases contain both $\omega^{-1}$ and $\omega$.
\end{proposition}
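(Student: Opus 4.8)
The plan is to reduce the statement to a short essentiality argument resting on a single structural observation about the four generator expressions. First I would record, by inspection of the logical expressions for $f_0,\ f_1,\ f_2,\ f_3$ in both the $\vdash$ and $\Gamma$ cases, which potential generators can satisfy more than one part. The crucial fact is that the only generators shared between two distinct parts are $\omega^{-1}$ (appearing in the lists for both $f_0$ and $f_1$) and $\omega$ (appearing in the lists for both $f_1$ and $f_3$); every other candidate — namely $\omega^{-2}$, $(\omega\delta)^{-1}$, $\omega^{-2}\delta^{-1}$, $\delta^{-1}$, $\omega^2$, $\omega\delta^{-1}$, together with all admissible generators of $f_2$ — satisfies exactly one part. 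For $f_2$ this non-overlap is precisely Lemma 6.1.2, since its generators all lie in the column directly to the right of $\otimes$ and hence cannot coincide with any generator of $f_0,\ f_1,\ f_3$.

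Next I would restate irreducibility in terms of \emph{essentiality}: a generating set $\mathcal{G}$ is irreducible exactly when every $g\in\mathcal{G}$ is the unique generator of at least one part, since otherwise $g$ could be removed via $\rightarrow_g$ without leaving any part uncovered. With this characterisation in hand, I would rule out the doubly-covered part being $f_0$, $f_2$, or $f_3$. Suppose, say, that $f_0$ is satisfied by two generators; by the overlap structure at least one of them is specific to $f_0$ and so covers no other part, yet $f_0$ remains covered by the second generator, making this part-specific generator non-essential and therefore removable — contradicting irreducibility. The identical argument, using that each of $f_3$ and $f_2$ has at most one shared generator, disposes of those parts. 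Consequently, in an irreducible non-basic case the doubly-covered part must be $f_1$.

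Finally I would pin down the two covers of $f_1$. Its three candidates are $\omega$, $\omega^{-1}$, and the configuration-dependent generator ($\delta^{-1}$ or $(\omega\delta)^{-1}$); the last of these is specific to $f_1$, so were it one of two covers of $f_1$ it would be removable, again contradicting irreducibility. Hence the two covers of $f_1$ are exactly $\omega^{-1}$ and $\omega$, and both lie in $\mathcal{G}$. Irreducibility then forces each to be essential elsewhere: since neither is the unique cover of $f_1$, the generator $\omega^{-1}$ must be the unique cover of $f_0$ and $\omega$ the unique cover of $f_3$. In particular $\mathcal{G}$ contains both $\omega^{-1}$ and $\omega$, which is the claim. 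Reading the same argument backwards shows this configuration is genuinely irreducible and non-basic, recovering the uniqueness asserted for the case of Figure~\ref{fig:edge_edge}.

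I expect the main obstacle to be the bookkeeping in the first step: verifying, uniformly across both the $\vdash$ and $\Gamma$ configurations, that $\omega^{-1}$ and $\omega$ are the \emph{only} generators appearing in more than one part's list (with Lemma 6.1.2 supplying the disjointness of $f_2$). Once this overlap structure is secured, the essentiality characterisation reduces the remainder to a short and routine elimination argument.
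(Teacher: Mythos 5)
Your proof is correct and rests on the same key fact as the paper's own argument: the only generators shared between the parts are $\omega^{-1}$ (common to $f_0$ and $f_1$) and $\omega$ (common to $f_1$ and $f_3$), with $f_2$ disjoint from the rest, so the only way a saturated part can resist reduction is for its extra covers to be pinned down by $f_0$ and $f_3$. Your essentiality characterisation and explicit part-by-part elimination make rigorous what the paper concludes more informally ``by observation of Figure~\ref{fig:neighbourhood_tables}'', but the underlying route --- overlap structure plus the impossibility of removing either shared generator --- is the same.
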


\begin{proof}
First, it is important to identify the root cause of the irreducibility  of $\{\omega^{-1},\ \omega,\ g(f_2)\}$. Examining the generators of each part of the false fundamental yields,
\begin{align*}
    f_0&:\ \omega^{-1} \\
    f_1&:\ \omega^{-1}\ \lor\ \omega\\
    f_2&:\ g(f_2)\\
    f_3&:\ \omega. \\
\end{align*}
Note that as before, the specific choice of generator for $f_2$ does not matter, as it has no impact on the other generators. 

As $f_1$ is satisfied by multiple generators, it is not a basic edge case. In order to reduce it to a basic edge case, it would be necessary to remove one of the generators of $f_1$ - either $\omega^{-1}$ or $\omega$. Removing either results in $f_0$ or $f_3$ respectively no longer being satisfied, however, and thus neither can be removed whilst still preserving the false fundamental. Thus, there are clearly multiple forms of irreducibility - beyond the basic case. 
\begin{itemize}
    \item Basic irreducibility - each $f_n\ \in\ \{f_0,\ f_1,\ f_2,\ f_3\}$ is satisfied exactly once. Hence, removing any number of generators will result in the false fundamental no longer being satisfied.
    \item Non-basic irreducibility - each $f_n\ \in\ \{f_0,\ f_1,\ f_2,\ f_3\}$ is satisfied, but in such a way that at least one is saturated (i.e. has more than one unique generator), and removing any of the additional generators results in the false fundamental no longer being satisfied. By observation of Figure~\ref{fig:neighbourhood_tables}, this overlap only occurs between $\omega^{-1}$ and $\omega$.
\end{itemize}

It is clear that there exist irreducibile cases containing $\omega^{-1} \oplus \omega$ - i.e. classes I and II (Table~\ref{tab:invariants}). Hence, the only non-basic irreducible edge cases must contain both $\omega^{-1}$ and $\omega$. Further, any case constructed to contain $\omega^{-1}$, $\omega$, and some number of other fundamentals is trivially reducible back to $\omega^{-1}$ and $\omega$. 

Thus, the only non-basically irreducible edge case is the one containing $\omega^{-1}$ and $\omega$. 
\end{proof}
It is worth noting that there are really three such cases, based on arbitrary choice of the generator for $f_2$.

\begin{figure}
    \centering
    \includegraphics[scale=0.6]{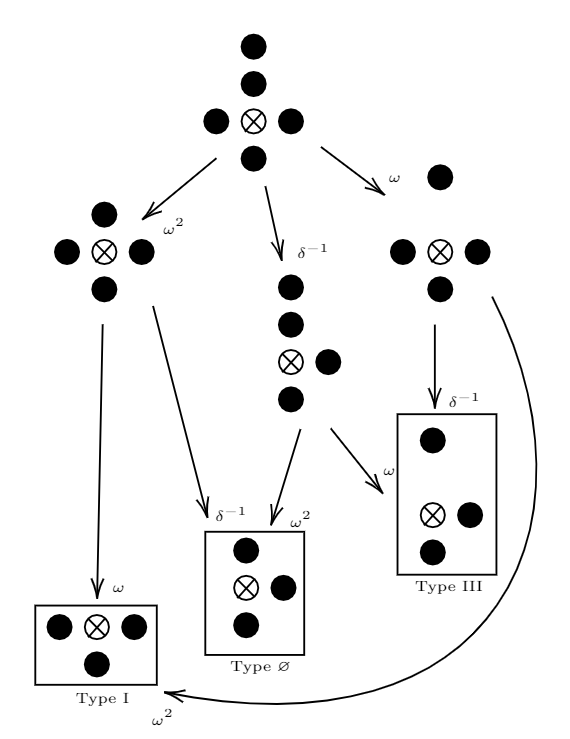}
    \caption{An example of a reduction graph, with each step (arrow) showing a reduction in the set of generators. Note that the special case of $\{\omega,\ \omega^{-1},\ g(f_2)\}$ is denoted as `Type $\varnothing$', and the configuration is $\Gamma\Gamma$ or $\Gamma\vdash$.}
    \label{fig:reduction}
\end{figure}

Through repeated application of all possible reductions to the nodes (to which reduction is yet to be applied), one may obtain a reduction graph for a given set of generators, $\mathcal{G}$ (Figure~\ref{fig:reduction}). Given that no reduction could ever produce a set of generators larger than the input, this graph will additionally be acyclic. Further, the terminating nodes (i.e. $\text{deg}^{-}(\nu) = 0$) of such graphs correspond to irreducible cases. Such a graph can, therefore, be used in order to understand the potential basic edge types that correspond to a given set of generators. 

\section{Prevalence of Basic Edge Types} \label{sec:prevalence}
As previously mentioned, it is important to understand the occurrence of false fundamentals, in order to better differentiate them from genuine ones. One such way is to consider the prevalence of each type in practice. This can be achieved by constructing a reduction graph for sample interpretations (from all combinatorial possibilities) with varying numbers of fundamentals (and their first three harmonics). 

In order to sample interpretations from the total sample space, it proves sufficient to construct them by selecting n unique notes, $\nu_{0}, \nu_{1}, ..., \nu_{n}$ from $\notesa$, treating all such notes as fundamentals, and thus adding them (and their harmonics) to the interpretation. For the charts in Figure~\ref{fig:prevfull}, a sample size of 1000 interpretations was taken for each number of simultaneous fundamentals (0, 120] For each of these interpretations, a naive algorithm (simply classifying all $\vdash$ and $\Gamma$-exhibiting notes as fundamentals) was applied, and a reduction graph derived from each $\otimes$ - where the difference between the input set and the result of the naive algorithm is the set of false fundamentals. From each of these reduction graphs, all terminal nodes were classified either as one of the basic edge types, or the special case, $\varnothing$. In cases where multiple terminal nodes were present, a value was added to each tally such that the sum of all added values was one.

Though 1000 interpretations may at first appear to be a relatively small sample size, it should be noted that this corresponds to 120,000 interpretations sampled on the whole, with an average of 15.75 (16) edge cases per interpretation. These are, as expected, concentrated around the centre of the distribution (of total simultaneous fundamentals), as the number of total possible edge cases peaks around the centre. Thus, on average, each set of 1000 interpretations leads to 1575 edge cases to classify, but with relatively sparse distribution to the tail-ends (i.e. $<10$ simultaneous fundamentals), which resulted in $<100$ edge cases being classified per 1000 interpretations. In order to ascertain a more reliable picture of the makeup of edge cases - particularly with low numbers of simultaneous fundamentals, a significantly larger sample size of 20000 interpretations was used (Figure~\ref{fig:prevlow}). 

Looking at Figure~\ref{fig:prevfull}, it is clear that not all basic edge types are equally common. Subfigure~\ref{fig:prevfull:subfig:pie} gives the overall occurrence of each type, with type 3 being the most common (along with the other three-fundamental cases), and type 5 being the least common (along with the other four-fundamental cases). The special case, $\varnothing$ appears to sit between the two, which intuitively coheres with other observations, as it too is a three-fundamental case - albeit not basic. In general, it is hard to draw meaningful insight from this, which incentivises the use of Subfigure~\ref{fig:prevfull:subfig:stackedge} - looking at the trends of the prevalences as the number of fundamentals changes. 

As Subfigure~\ref{fig:prevfull:subfig:stackedge} shows, at low numbers of simultaneous fundamentals, the three-fundamental cases are significantly more dominant than the four-fundamental cases - constituting almost 100\% of the cases until around 16 simultaneous fundamentals. Beyond this point, the incidence of four-fundamental cases increases significantly - particularly types 6 and 8 - with the special case $\varnothing$ notably occurring increasingly less often. As before, it is hard to directly relate these results to real-world data (i.e. recorded music), which is much more structured than the random samples that were used, but a number of conclusions can still be drawn,
\begin{itemize}
    \item With low numbers of simultaneous fundamentals (eg. string quartet), cases 5-8 are incredibly unlikely to occur.
    \item From Subfigure~\ref{fig:prevfull:subfig:stackall}, it is clear that even at large numbers of fundamentals, the accuracy of even the naive algorithm on polyphonic music - with perfect noise removal, recording, playing, etc. - is above around 75\%.
\end{itemize}
Regarding Subfigure~\ref{fig:prevfull:subfig:decomp}, graphs appear to have between three and five (of a possible nine) terminal nodes, with the average broadly decreasing as the number of fundamentals grows. The trend appears more turbulent towards the left tail, which is likely due to the low number of samples for these numbers of fundamentals. 

By combining this knowledge with a heuristic for the number of fundamentals at a given $\mathcal{I}_{\tau}$, it may be possible to more easily distinguish between fundamentals and false fundamentals by comparing specific examples to the profile laid out above. 

Figure~\ref{fig:prevlow} considers specifically the cases for which there are a low ($<10$) number of simultaneous fundamentals. In these cases, the total occurrence of four-fundamental basic cases is, on average, 3.6\%, with the majority of these weighted towards interpretations with $>6$ simultaneous fundamentals (Subfigures~\ref{fig:prevlow:subfig:pie}~\&~\ref{fig:prevlow:subfig:stackedge}). Particularly interestingly, the most common case in this subset of interpretations is the special case, $\varnothing$, with 20.4\% of the total. Overall, the trend of three-fundamental cases being more common remains, but the ordering within these groupings change - most notably (beyond $\varnothing$'s jump) with type 5 cases being significantly more prevalent than their counterparts compared to the data in Figure~\ref{fig:prevfull}. 

\begin{landscape}
\centering
\begin{figure}[htbp]\label{fig:prevfull}
\vspace*{-1.6cm} 
  \begin{subfigure}[b]{0.5\linewidth}
    \centering
    \includegraphics[scale=0.6]{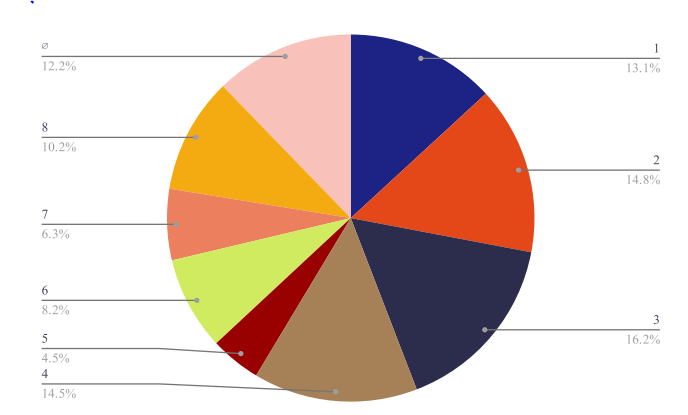}
    \caption{A pie chart showing the average (proportional) prevalence of \\ each edge type, and $\varnothing$.} 
    \label{fig:prevfull:subfig:pie} 
    \vspace{4ex}
  \end{subfigure}
  \begin{subfigure}[b]{0.5\linewidth}
    \centering
    \includegraphics[scale=0.6]{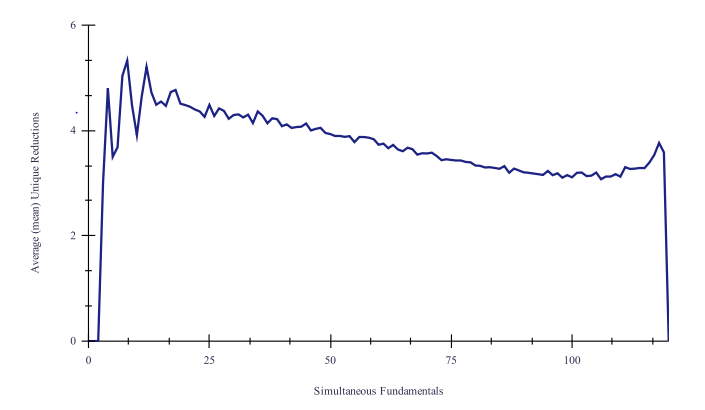}
    \caption{A graph showing the estimated average number of terminal nodes for varying numbers of simultaneous unique fundamentals.} 
    \label{fig:prevfull:subfig:decomp} 
    \vspace{4ex}
  \end{subfigure} 
  \begin{subfigure}[b]{0.5\linewidth}
    \centering
    \includegraphics[scale=0.55]{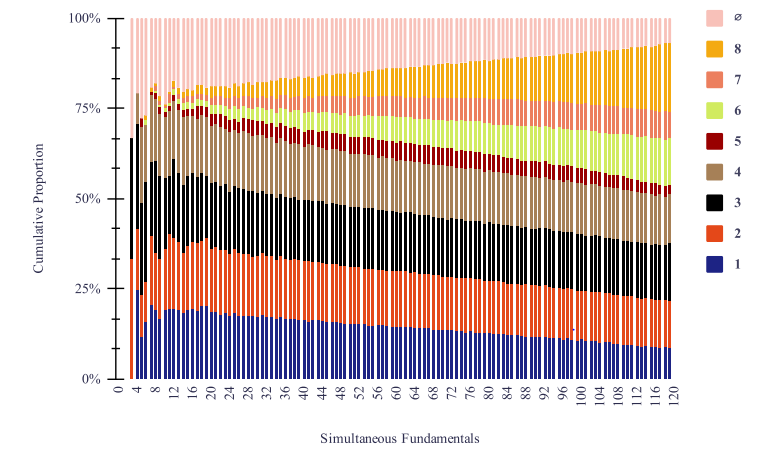}
    \caption{A stacked bar chart showing the change in proportional \\ prevalence of basic edge types and $\varnothing$ as the number of \\ simultaneous unique fundamentals changes.} 
    \label{fig:prevfull:subfig:stackedge} 
  \end{subfigure}
  \begin{subfigure}[b]{0.5\linewidth}
    \centering
    \includegraphics[scale=0.6]{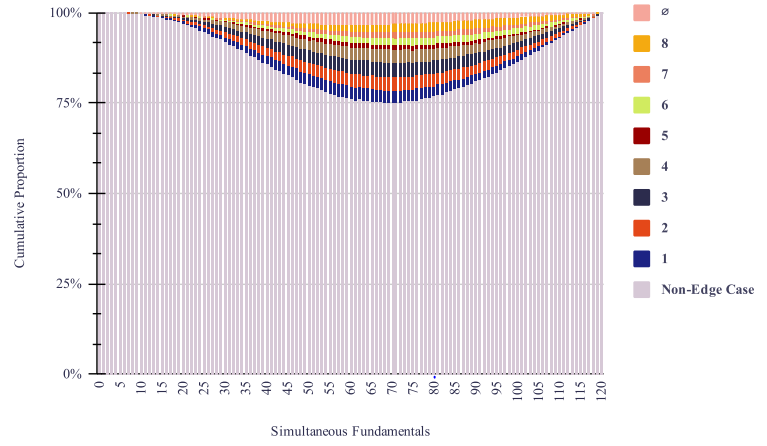}
    \caption{Another stacked bar chart, mirroring subfigure~\ref{fig:prevfull:subfig:stackedge}, but including the proportion of non-edge cases.}
    \label{fig:prevfull:subfig:stackall} 
  \end{subfigure} 
  \caption{Charts depicting various properties relating to the prevalence of basic edge types with respect to the number of simultaneous unique fundamentals.}
  \label{fig:prevfull} 
\end{figure}
\end{landscape}

\begin{landscape}
\begin{figure}[]\label{fig:prevlow}
\vspace*{-1.6cm} 
  \begin{subfigure}[b]{0.5\linewidth}
    \centering
    \includegraphics[scale=0.55]{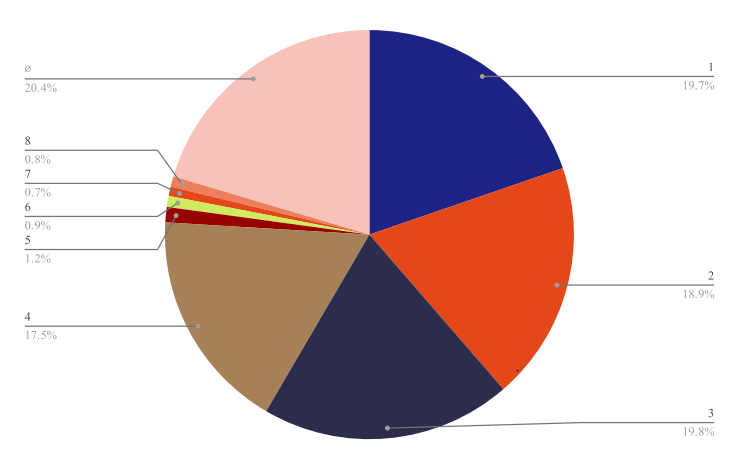}
    \caption{A pie chart showing the average (proportional) prevalence of \\ each edge type, and $\varnothing$ for total simultaneous fundamentals (0, 10].} 
    \label{fig:prevlow:subfig:pie} 
    \vspace{4ex}
  \end{subfigure}
  \begin{subfigure}[b]{0.5\linewidth}
    \centering
    \includegraphics[scale=0.55]{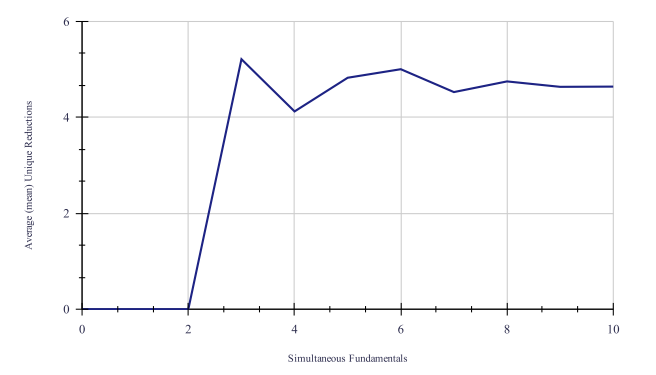}
    \caption{A graph showing the estimated average number of terminal nodes for varying numbers of simultaneous unique fundamentals (0, 10] - corresponding to the left hand side of Subfigure~\ref{fig:prevfull:subfig:decomp}.} 
    \label{fig:prevlow:subfig:decomp} 
    \vspace{4ex}
  \end{subfigure} 
  \begin{subfigure}[b]{\linewidth}
    \centering
    \includegraphics[scale=0.6]{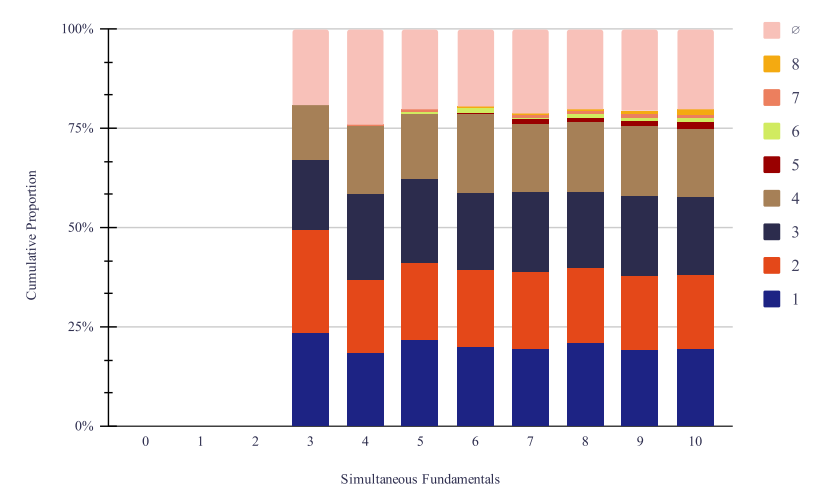}
    \caption{A stacked bar chart showing the change in proportional \\ prevalence of basic edge types and $\varnothing$ for (0, 10] simultaneous unique fundamentals.} 
    \label{fig:prevlow:subfig:stackedge} 
  \end{subfigure}
  \caption{Charts mirroring those in Figure~\ref{fig:prevfull}, but considering only interpretations with (0, 10] simultaneous unique fundamentals.}
  \label{fig:prevlow} 
\end{figure}
\end{landscape}

\section{Experimental Application} \label{sec:real}
Though this model is useful theoretically, in practice, real-world applications are rarely so clear-cut or clean - and will remain so unless there exists some perfect approach to noise removal, amongst other preprocessing. Hence, it is prudent to look not at the discrete, but at the continuous in intepretations, $\mathcal{I}$ - i.e. $\mathcal{I}:\notes\to\mathbb{B}$  becomes $\mathcal{I}:\notes\to\mathbb{R}$.

\begin{figure}
    \centering
    \includegraphics[scale=0.65]{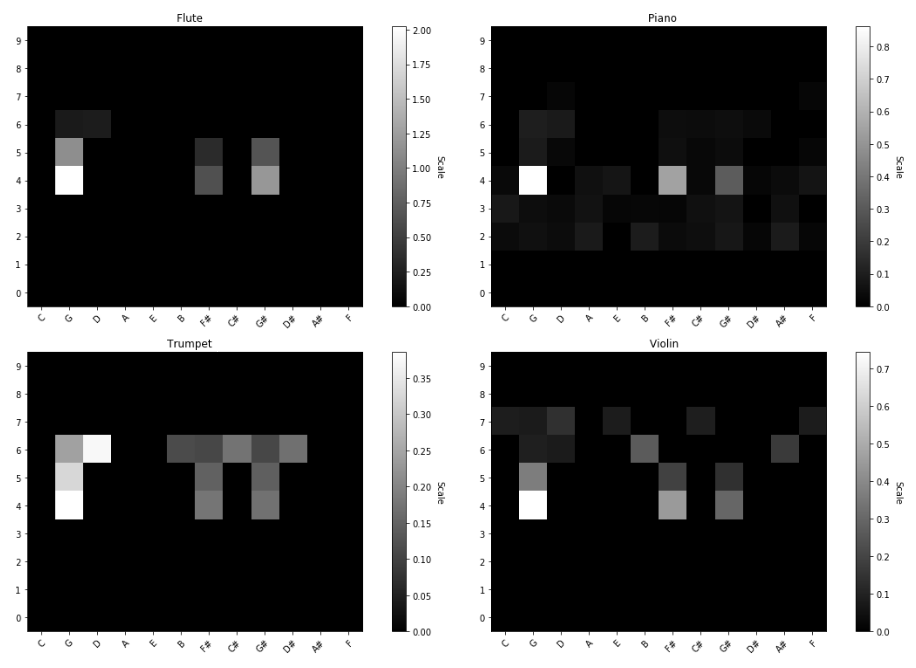}
    \caption{The note G4 being played on a variety of instruments, all exhibiting the $\Gamma$ shape described in Section~\ref{sec:proposed}.}
    \label{fig:iowa_mono}
\end{figure}

Doing so effectively creates a heatmap, in which this additional dimension (perpendicular to $\notes$) represents an `intensity' of each note - for example, their respective amplitudes in the frequency domain \footnote{
This construction can be viewed as a real rank 1 trivial vector bundle, with $\notes$ as a base manifold with trivial topology. In this interpretation, a heatmap is a slice through the topological bundle.}. 
Even in this kind of construction, however, the $\vdash$/$\Gamma$ shapes are very much still prominent - as demonstrated when applied to some monophonic signals from the University of Iowa (Electronic Music Studios) \citep{fritts} (Figure~\ref{fig:iowa_mono}). Here, the intensity is visualised through brightness, with brighter notes representing more prominent frequencies. Though timbrally very different, all of the instruments shown clearly exhibit the $\Gamma$ shape as anticipated.

Despite this, there are clear differences in the prominence of these shapes between the various instruments. Though flute and trumpet exhibit exceptionally clean examples, the clarity in the piano and violin heatmaps is - whilst still interpretable - somewhat diminished. This is likely a result of multiple media (in the case of piano and violin, strings) vibrating in sympathy to the true fundamental - particularly given that the strings are housed in a shared body. Further, the resonance of this body may also have contributed to the noise. 

To build these models, sliding windows were taken from the signal, with a length of 4096 samples, and a hop size of 1024. A constant-Q transform \citep{constantQ} with a Hanning window \citep{hanning} (using the Librosa implementation \citep{librosa}) was then applied to achieve a frequency domain representation binned by the 120 semitones of the western musical scale between C0 and B9 inclusive. These values were then normalised across the signal (not just per window), and plotted as a heatmap using matplotlib \citep{matplotlib}. Further, each window used here corresponds to a unique interpretation, $\mathcal{I}_{\tau}$, where $\tau$ is the start time index of the window.

It is worth noting that the shapes that appear to mirror the fundamentals and their harmonics in chromatically adjacent columns (i.e. F$\sharp$ and G$\sharp$ in the case of Figure~\ref{fig:iowa_mono}) are a result of spectral leakage, which has not been entirely nullified by the Hanning window. In practice, this could likely be removed, or otherwise accounted for in specific algorithms and approaches. 

\begin{figure}
    \centering
    \includegraphics[scale=0.9]{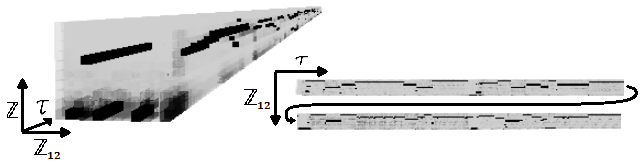}
    \caption{Left: Side-on view of a 3D heatmap of the melody of Bach's ``Ach Gott und Herr", from the Bach10 dataset \citep{duan2015bach10}, with darker colours corresponding to greater amplitudes. Right: Projection of the heatmap onto the $\mathbb{Z}_{12}\times\tau$ plane, eliciting piano roll notation of the piece (albeit ordered by the circle of fifths, and not chromatically).}
    \label{fig:3dplot}
\end{figure}

Looking further, at the three-dimensional heatmap described in Section~\ref{sec:proposed} (with each $\mathcal{I}$ indexed as $\mathcal{I}_{[\tau, \tau+1)}$), Figure~\ref{fig:3dplot} is achieved. With a projection onto the $\mathbb{Z}_{12}\times\tau$ plane, piano-roll notation is achieved as expected. Algorithms working in this space may be able to smooth the estimation in the temporal domain by better-exploiting the temporal aspects of music; it is certainly a great oversimplification to treat each window (and therefore each interpretation) as independent of one another. 

This was achieved using Python's vpython \citep{vpython} module, representing each note as a black (\#ffffff) cube, each with opacity proportional to the loudest note within each specific interpretation\footnote{as opposed to in Figure~\ref{fig:iowa_mono}, where amplitudes are normalised across the whole signal},
\begin{equation}
    Opacity_{\nu} = \frac{|\nu|}{\mathcal{I}_{MAX}} + 0.05,
\end{equation}
with slight linear scaling to make each individual note stand out 
better. Note that $|\nu|$ here refers to the amplitude of a given note, $\nu$, and $\mathcal{I}_{MAX}$ refers to the largest amplitudes in a given interpretation. It is worth noting that an approach using varying shades of grey as opposed to solid black yielded less optimal results.

\begin{figure}
    \centering
    \includegraphics[scale=0.7]{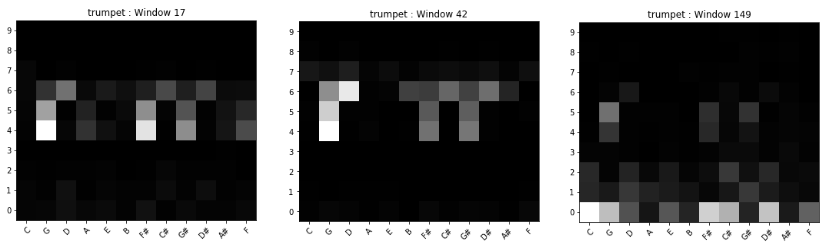}
    \caption{A closer look at how heatmaps differ as the note progresses from onset/attack, to transient, to offset/decay (left-to-right).}
    \label{fig:trumpet_over_time}
\end{figure}

Figure~\ref{fig:trumpet_over_time} shows the notable differences between heatmaps constructed from windows from the onset, transient, and decay of a single note. As expected, the shapes are clearest during the transient, but in general this raises the more profound issue of choosing an appropriate window, or windows, when given chunks of a signal - such as following onset detection. A simple yet effective heuristic is to consider both the total number of bins filled above some threshold $\alpha$ (eg. 3.25$\mu$ \citep{goodman2018real}), and the total magnitude of all bins above this threshold in a given window. That is,
\begin{equation}
    \frac{\sum\limits_{\nu\in N}\nu}{|\{\nu\ |\ \nu\geq\alpha,\ \nu\in N\}|},\ N = \{\nu\ |\ \mathcal{I}(\nu),\nu\in\notesa\};
\end{equation}
effectively the average amplitude of an audible note in the window described by $\mathcal{I}$. Doubtless there are more sophisticated approaches, but this serves its purpose if nothing else but as a benchmark. Should time efficiency not be of particular concern, of course, it may be optimal to consider all windows in a chunk (taking their average result) - only discarding a handful of particularly noisy or otherwise useless ones.  

\begin{figure}
    \centering
    \includegraphics[scale=0.7]{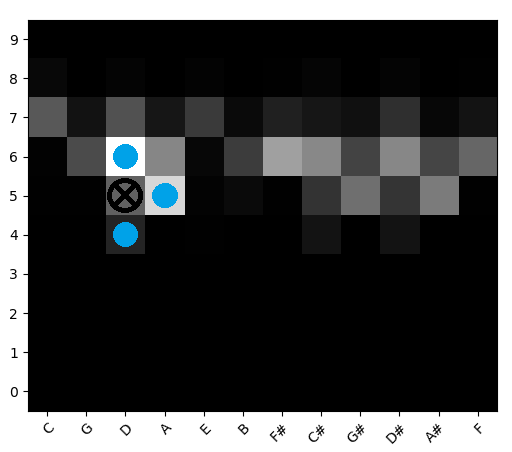}
    \caption{An edge case (specifically $\varnothing$) being exhibited on real data (trumpet) - with D4, A5, and D6 (blue dots) as the fundamentals, and D5 being the false fundamental, $\otimes$.}
    \label{fig:real_edge_case}
\end{figure}

Figure~\ref{fig:real_edge_case} depicts an edge case built up of the notes D4, A5, and D6, all played on trumpet - exhibiting the special case, $\varnothing$. Though masked somewhat by the spectral leakage on the right-hand side of the image, it is clear to see how such edge cases fool the naive algorithm, even when a threshold is utilised to cut out noise. Of course, the use of Algorithm~\ref{alg:interpalg} alleviates this by attempting to remove inharmonic noise, but in doing so, may cause false negatives to arise. Note that when using Algorithm~\ref{alg:interpalg} on real-world data, the same switch from $\mathbb{B}$ to $\mathbb{R}$ applies, insofar as the lines $\mathcal{M}[i, j] = 1$ be replaced by $\mathcal{M}[i, j] = |\nu_{i,j}|$.
That is, the amplitude of the bin as opposed to a boolean representing whether or not the note corresponding to the bin was audible. 

\section{Evaluation} \label{sec:eval}
This section presents a brief evaluation of both a naive algorithm on monophonic music, and a more sophisticated (albeit still simplified) algorithm on theoretical polyphonic samples - similar to those utilised in Section~\ref{sec:prevalence}. As noted beforehand, the intention of this paper (and investigation on the whole) is not to achieve state of the art results on MPE problems, but rather to lay the foundations for more geometrical approaches to them. Thus, the evaluation is brief, but nonetheless provides insight - particularly surrounding future work. 

For monophonic signals, a naive algorithm that treats the relation between fundamentals and $\vdash$/$\Gamma$-exhibiting notes as an equivalence was used. As shown in Section~\ref{sec:proposed}, this is untrue due to the presence of edge cases, but nonetheless when only one fundamental is present, such cases can never occur. In response to the spectral leakage, the algorithm was modified slightly to take not only the shape of the potential fundamental and its harmonics into account, but also the corresponding shapes at $\delta^{\pm5}$.

This was applied to a total of 1395 monophonic samples from the University of Iowa dataset, spanning 17 instruments in total (some of which were categorised into vibrato and non-vibrato playing), resulting in a mean accuracy of 73.74\%. Removing the outliers (violin/viola/cello/double bass (pizz.), and tuba), this average becomes 88.27\%. When considering just whether the pitch chroma is correct (i.e. diregarding octave errors), this increases to 95.08\%. Table~\ref{tab:monoresults} benchmarks this against an implementation of Noll's Harmonic Product Spectrum (HPS) algorithm, also using a Hanning window. 

A table containing a full breakdown of results, broken down by instrument (and vibrato/non-vibrato playing), can be found in Appendix~\ref{app:monotab}.

\begin{table}[]
\centering
\begin{tabular}{@{}llll@{}}
\toprule
                         & \textbf{HPS} & \textbf{Naive} \\ \midrule
\textbf{Overall}         & 58.36\%      & 73.74\%       \\
\textbf{No Outliers}     & 67.73\%      & 88.27\%            \\
\textbf{Chroma Accuracy} & 77.61\%      & 95.08\%               \\ \bottomrule
\end{tabular}
\caption{Table showing the average accuracy of both the naive algorithm and the HPS algorithm as a benchmark when applied to the University of Iowa samples.}
\label{tab:monoresults}
\end{table}

\begin{figure}
    \centering
    \includegraphics[scale=0.17]{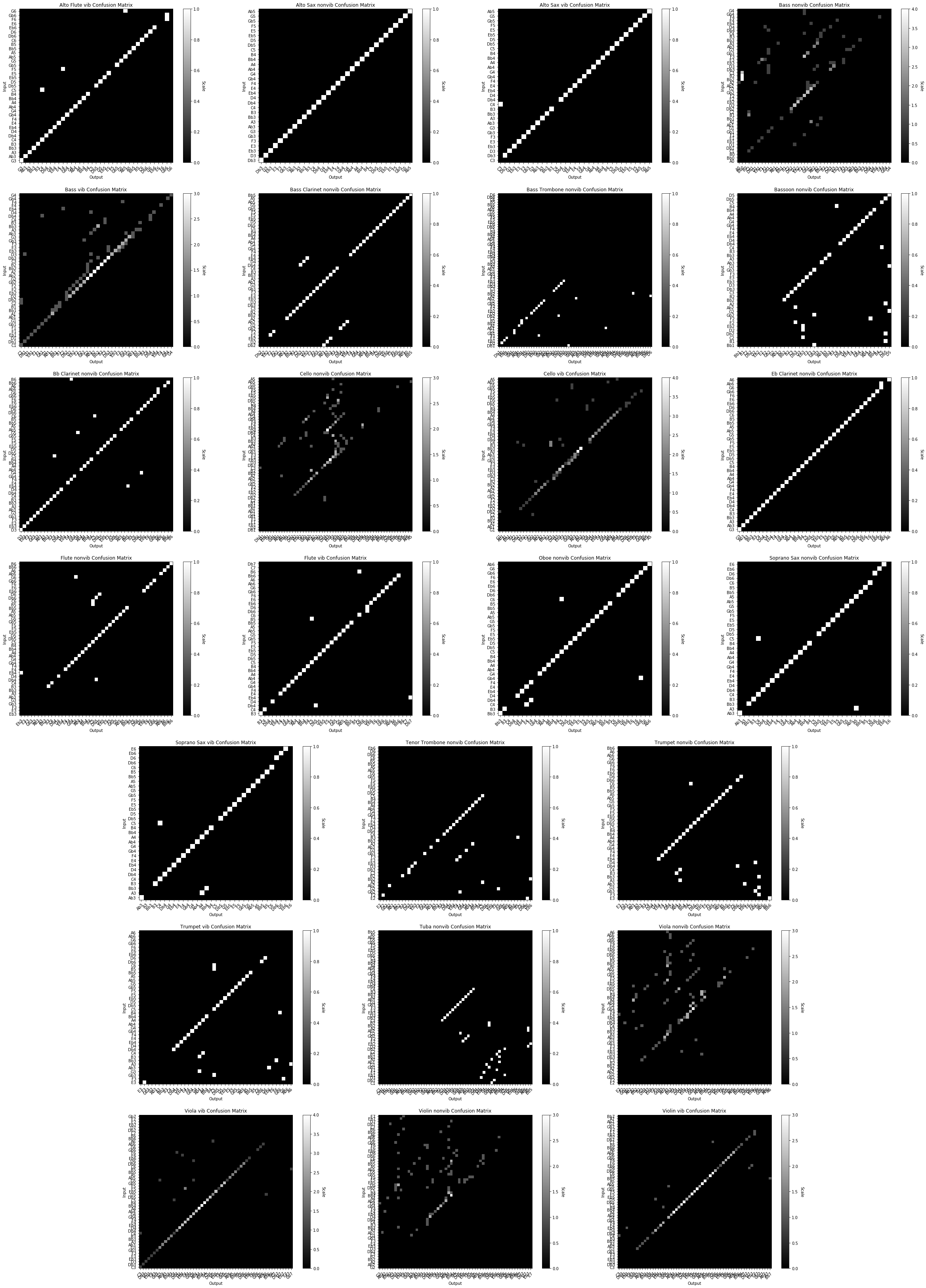}
    \caption{Confusion matrices for each set of samples. From top-left to bottom-right: \textbf{1}) Alto Flute (vib.); \textbf{2}) Alto Sax (non-vib.); \textbf{3}) Alto Sax (vib.); \textbf{4}) Bass (pizz.); \textbf{5}) Bass (arco); \textbf{6}) Bass Clarinet (non-vib.); \textbf{7}) Bass Trombone (non-vib.); \textbf{8}) Bassoon (non-vib.); \textbf{9}) B$\flat$ Clarinet (non-vib.); \textbf{10}) Cello (pizz.); \textbf{11}) Cello (arco); \textbf{12}) E$\flat$ Clarinet (non-vib.); \textbf{13}) Flute (non-vib.); \textbf{14}) Flute (vib.); \textbf{15}) Oboe (non-vib.); \textbf{16}) Soprano Sax (non-vib.); \textbf{17}) Soprano Sax (vib.); \textbf{18}) Tenor Trombone (non-vib.); \textbf{19}) Trumpet (non-vib.); \textbf{20}) Trumpet (vib.); \textbf{21}) Tuba (non-vib.); \textbf{22}) Viola (pizz.); \textbf{23}) Viola (arco); \textbf{24}) Violin (pizz.); \textbf{25}) Violin (arco).}
    \label{fig:confusion_matrices}
\end{figure}

Though, as expected, this approach does not reach state of the art results, it still outperforms HPS by a significant margin. Figure~\ref{fig:confusion_matrices} consists of confusion matrices for each of the instruments (and playing types), showing the algorithm's input (x-axis) against its output (y-axis). Thus, the line $y = x$ is indicative of perfect accuracy, and deviations from this line correspond to errors in the classification. Note that the axes are truncated to match the range of notes tested on each instrument, with the y-axis running chromatically upwards from bottom to top, and the x-axis running chromatically upwards from left to right. Even at first glance, the outliers are relatively clear, and this kind of visualisation has the potential to elicit more profound understanding of how and where an algorithm is failing, and perhaps even (by extrapolation) particular properties of certain instruments that make them more troublesome for pitch detection approaches.

\begin{figure}
    \centering
    \includegraphics[scale=0.55]{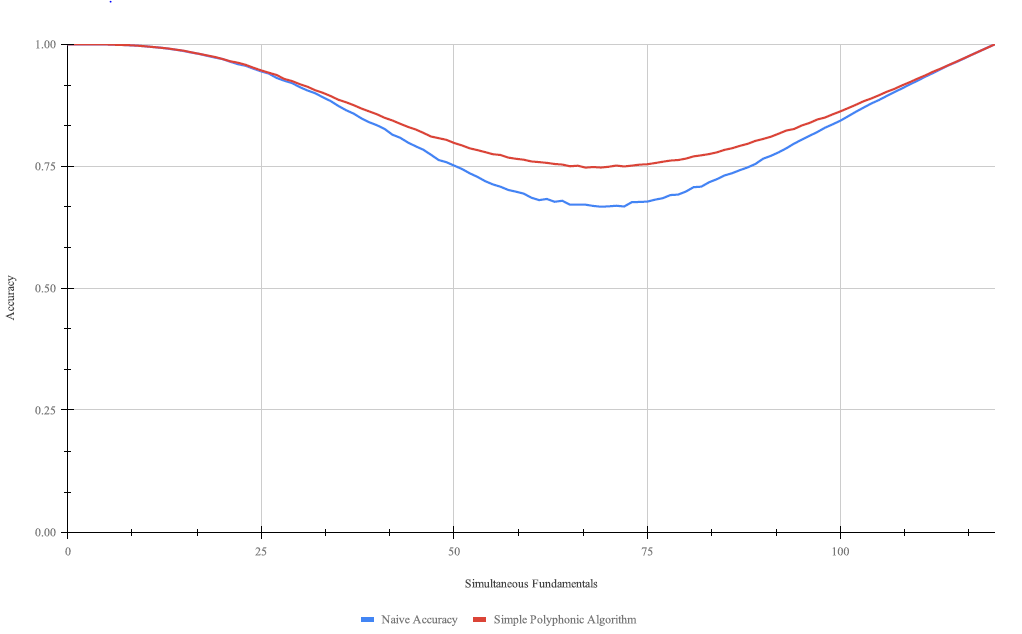}
    \caption{Simulated accuracy for a naive approach (blue), and simple algorithm (red) when applied to sample polyphonic data.}
    \label{fig:polyresults}
\end{figure}

Further, for polyphonic input, a simple extension to the naive monophonic approach, whereby $\notesa$ is traversed left-to-right, bottom-to-top, was utilised. This exploited the assumption that - at least with acoustic music - there will be no undertones. Thus, the bottom-leftmost note with amplitude above some threshold will always be a fundamental \citep{goodman2018real}. The naive algorithm is then applied to subsequent notes, and for each potential fundamental, the possible generators for each of its harmonics are enumerated and checked against the current list of perceived fundamentals (i.e. those that have already been classified as such by the algorithm). If two or more (of a maximum four) of the fundamental and its generators have one or more harmonics that have already been classified as a fundamental, the note is considered to be a false fundamental, and is discarded. The choice of threshold here may seem somewhat arbitrary, but was chosen as there are a significant number of generators that lie above or to the right of the note being classified - most notably the harmonics themselves. Thus, whilst they may themselves be fundamentals, it is unclear at this stage of the algorithm. This choice was then tested empirically, with a value of two (from choices [1, 4]) resulting in the best performance.

As in Section~\ref{sec:prevalence}, 1000 sample interpretations were taken for each number of simultaneous unique fundamentals. The accuracy of this simple approach is benchmarked against the accuracy of the naive approach in Figure~\ref{fig:polyresults}. Though there is a clear increase in accuracy, it is anticipated that it will be possible to build on this simplistic approach using the analysis and techniques outlined in Sections~\ref{sec:edge},~\ref{sec:reduction}, and~\ref{sec:prevalence}, but the implementation of this is beyond the scope of this paper.

\section{Summary and Future Work} \label{sec:future}
Moving forwards, there are a number of improvements and implementations that can be created off of the back of this framework.

Firstly, the simple polyphonic approach can be refined and extended using the characterisations of edge cases outlined in Section~\ref{sec:edge}. This could further be augmented by combining the approaches in Sections~\ref{sec:reduction} and~\ref{sec:prevalence} to perform a `backward pass' over $\notesa$, working right to left, top to bottom, and utilising reduction to reconsider the likelihood of notes presenting as false fundamentals. This can further use the working count of currently perceived generators as a heuristic for the number of distinct simultaneous fundamentals. This kind of forward-backward approach is a fundamental concept in latent variable models - a concept which itself aligns well with the problem of pitch estimation as set out in this paper.

In addition, it may be possible to reformulate the problem as a decomposition of the total heatmap into its constituent (albeit constructively overlapping) $\vdash$ and $\Gamma$ shapes, potentially using spectrogram subtraction to represent this decomposition. This pivots the work towards combinatorics rather than a necessarily more algorithmic perspective. 

Further, one could build on the three-dimensional model to create approaches that effectively utilise the temporal aspects of music in their predictions. One such example may be the extraction of $\vdash$ and $\Gamma$-shaped prisms from the extended heatmap - representing a note, or notes, sounding for some period of time. 

In conclusion, this paper presents a different perspective on approaching pitch estimation problems - doing so from a more geometric perspective. To this end, an idealised model of fundamentals and their harmonics is introduced (and later adapted to real-world scenarios). Importantly, from a geometrical perspective in particular, this model results in spatially-close shapes, namely $\vdash$ and $\Gamma$. Further, it provides a framework on which to approach pitch estimation problems in this way, along with a thorough investigation into the edge cases that occur in this model. 

Though the simple algorithms outlined in Section~\ref{sec:eval} do not provide state of the art results, the intention is instead to provide a solid foundation on which to construct more sophisticated algorithms - particularly utilising the characterisation of, and insight into, edge cases. Furthermore, this is certainly a step towards more intuitive and innovative geometrical solutions in the field of pitch estimation, and Music Information Retrieval on the whole. 

\section*{Acknowledgements}
This work was supported by the EPSRC under Grant: EP/R513167/1.

\newpage
\bibliographystyle{tMAM}
\bibliography{citations}

\newpage
\appendix

\section{Creation of an Interpretation, $\mathcal{I}$}\label{app:algs}
\begin{algorithm}[]
  \caption{Creation of an Interpretation, $\mathcal{I}$, from a Sorted Set of Notes}
  \label{alg:interpalg}
 \begin{algorithmic}
 \renewcommand{\algorithmicrequire}{\textbf{Input:}}
 \renewcommand{\algorithmicensure}{\textbf{Output:}}
 \REQUIRE $\Phi$, a chromatically sorted set of notes
 \ENSURE $\mathcal{M}$, a (matrix) interpretation of the notes in $\Phi$ 
\hrule
\STATE 
\STATE $\mathcal{M}\longleftarrow zeroes(10,\ 12)$
\FOR{$\notese\in\Phi$}
    \STATE // \textit{Is $\notese$ a harmonic of another note?}
    \STATE $f_{1}\longleftarrow\mathcal{M}[i,\ j-1]$ \\
    \STATE $f_{2}\longleftarrow 0$  \\
    \IF{$\Psi (\chi_{i-1})\ =\ \vdash$}
        \STATE $f_{2}\longleftarrow\mathcal{M}[i-1,\ j-1]$ \\
    \ELSE
        \STATE $f_{2}\longleftarrow\mathcal{M}[i-1,\ j-2]$ \\
    \ENDIF
    \STATE $f_{3}\longleftarrow\mathcal{M}[i,\ j-2]$ \\
    \IF{$f_{1}\lor f_{2}\lor f_{3}$} 
    \STATE $\mathcal{M}[i,\ j]\ =\ 1$ \\
    \STATE \textbf{continue} \\
    \ENDIF
    \STATE
    \STATE //\textit{ Is $\notese$ a potential fundamental?}
    \STATE $\phi_{1}\longleftarrow\notese\in\Phi$ \\
    \STATE $\phi_{2}\longleftarrow\bot$ \\ 
    \IF{$\Psi (\chi_{i-1})\ =\ \vdash$} 
        \STATE $\phi_{2}\longleftarrow\nu_{i+1,\ j+1}$ \\ 
    \ELSE
        \STATE $\phi_{2}\longleftarrow\nu_{i+1,\ j+2}$ \\ 
    \ENDIF 
    \STATE $\phi_{3}\longleftarrow\nu_{i,\ j+2}\in\Phi$ \\
    \IF{$\phi_{1}\land\phi_{2}\land\phi_{3}$}
        \STATE $\mathcal{M}[i,\ j]\ =\ 1$ \\
        \STATE \textbf{continue} \\
    \ENDIF
    \STATE 
    \STATE $\Phi\longleftarrow\Phi\setminus\notese$
\ENDFOR 
\STATE \textbf{return} $\mathcal{M}$
 \end{algorithmic}
 \end{algorithm}

\newpage
\section{Full Results - Naive Algorithm (Monophonic)}\label{app:monotab}
\begin{table}[H]
\begin{tabular}{@{}lllrrlrr@{}}
\toprule
\multicolumn{1}{c}{\multirow{2}{*}{\textbf{Instrument}}} & \multirow{2}{*}{\textbf{Type}} & \multicolumn{3}{c}{\textbf{HPS}}                                    & \multicolumn{3}{c}{\textbf{Naive}}                                     \\
\multicolumn{1}{c}{}                                     &                                & 1       & \multicolumn{1}{l}{2}       & \multicolumn{1}{l}{3}       & 1        & \multicolumn{1}{l}{2}        & \multicolumn{1}{l}{3}        \\ \midrule
\textbf{Alto Flute}                                      & Vib                            & 88.89\% & 88.89\%                     & 88.89\%                     & 97.22\%  & 97.22\%                      & 97.22\%                      \\
\multirow{2}{*}{\textbf{Alto Sax}}                       & Nonvib                         & 75.00\% & 75.00\%                     & 81.25\%                     & 100.00\% & 100.00\%                     & 100.00\%                     \\
                                                         & Vib                            & 68.75\% & 68.75\%                     & 75.00\%                     & 100.00\% & 100.00\%                     & 100.00\%                     \\
\multirow{2}{*}{\textbf{Bass}}                           & Pizz                           & 20.19\% & \multicolumn{1}{l}{-}       & \multicolumn{1}{l}{-}       & 53.85\%  & \multicolumn{1}{l}{-}        & \multicolumn{1}{l}{-}        \\
                                                         & Arco                           & 36.54\% & 36.54\%                     & 39.42\%                     & 71.15\%  & 53.85\%                      & 72.12\%                      \\
\textbf{Bass Clarinet}                                   & Nonvib                         & 63.04\% & 63.04\%                     & 65.22\%                     & 100.00\% & 100.00\%                     & 100.00\%                     \\
\textbf{Bass Trombone}                                   & Nonvib                         & 0.00\%  & 0.00\%                      & 29.63\%                     & 44.44\%  & 44.44\%                      & 62.96\%                      \\
\textbf{Bassoon}                                         & Nonvib                         & 45.00\% & 45.00\%                     & 62.50\%                     & 75.00\%  & 75.00\%                      & 95.00\%                      \\
\textbf{Bb Clarinet}                                     & Nonvib                         & 84.78\% & 84.78\%                     & 84.78\%                     & 97.83\%  & 97.83\%                      & 97.83\%                      \\
\multirow{2}{*}{\textbf{Cello}}                          & Pizz                           & 18.00\% & \multicolumn{1}{l}{-}       & \multicolumn{1}{l}{-}       & 46.00\%  & \multicolumn{1}{l}{-}        & \multicolumn{1}{l}{-}        \\
                                                         & Arco                           & 65.26\% & 65.26\%                     & 68.42\%                     & 88.42\%  & 88.42\%                      & 88.42\%                      \\
\textbf{Eb Clarinet}                                     & Nonvib                         & 82.05\% & 82.05\%                     & 82.05\%                     & 94.87\%  & 94.87\%                      & 94.87\%                      \\
\multirow{2}{*}{\textbf{Flute}}                          & Nonvib                         & 94.59\% & 94.59\%                     & 94.59\%                     & 100.00\% & 100.00\%                     & 100.00\%                     \\
                                                         & Vib                            & 94.59\% & 94.59\%                     & 94.59\%                     & 100.00\% & 100.00\%                     & 100.00\%                     \\
\textbf{Oboe}                                            & Nonvib                         & 77.14\% & 77.14\%                     & 97.14\%                     & 100.00\% & 100.00\%                     & 100.00\%                     \\
\multirow{2}{*}{\textbf{Soprano Sax}}                    & Nonvib                         & 84.38\% & 84.38\%                     & 87.50\%                     & 90.63\%  & 90.63\%                      & 90.63\%                      \\
                                                         & Vib                            & 78.13\% & 78.13\%                     & 81.25\%                     & 90.63\%  & 90.63\%                      & 90.63\%                      \\
\textbf{Tenor Trombone}                                  & Nonvib                         & 33.33\% & 33.33\%                     & 66.67\%                     & 78.79\%  & 78.79\%                      & 100.00\%                     \\
\multirow{2}{*}{\textbf{Trumpet}}                        & Nonvib                         & 51.43\% & 51.43\%                     & 82.86\%                     & 74.29\%  & 74.29\%                      & 97.14\%                      \\
                                                         & Vib                            & 51.43\% & 51.43\%                     & 82.86\%                     & 74.29\%  & 74.29\%                      & 100.00\%                     \\
\textbf{Tuba}                                            & Nonvib                         & 18.92\% & \multicolumn{1}{l}{-}       & \multicolumn{1}{l}{-}       & 48.65\%  & \multicolumn{1}{l}{-}        & \multicolumn{1}{l}{-}        \\
\multirow{2}{*}{\textbf{Viola}}                          & Pizz                           & 22.00\% & \multicolumn{1}{l}{-}       & \multicolumn{1}{l}{-}       & 28.00\%  & \multicolumn{1}{l}{-}        & \multicolumn{1}{l}{-}        \\
                                                         & Arco                           & 88.00\% & \multicolumn{1}{l}{88.00\%} & \multicolumn{1}{l}{91.00\%} & 100.00\% & \multicolumn{1}{l}{100.00\%} & \multicolumn{1}{l}{100.00\%} \\
\multirow{2}{*}{\textbf{Violin}}                         & Pizz                           & 25.27\% & \multicolumn{1}{l}{-}       & \multicolumn{1}{l}{-}       & 38.46\%  & \multicolumn{1}{l}{-}        & \multicolumn{1}{l}{-}        \\
                                                         & Arco                           & 92.22\% & \multicolumn{1}{l}{92.22\%} & \multicolumn{1}{l}{96.67\%} & 97.78\%  & \multicolumn{1}{l}{97.78\%}  & \multicolumn{1}{l}{100.00\%} \\ \bottomrule
\end{tabular}
\caption{Table showing the performance of the naive algorithm on monophonic samples from the University of Iowa Electronic Music Studios dataset, benchmarked against Noll's HPS algorithm. 1, 2, and 3 correspond to the whole dataset, sans outliers, and chroma accuracy respectively.}
\end{table}

\end{document}